\newtheorem{remark}{Remark}
\newtheorem{definition}{Definition}
\newtheorem{theorem}{Theorem}
\newtheorem{lemma}{Lemma}
\newcommand{\oomit}[1]{}
\newcolumntype{M}[1]{>{\centering\arraybackslash}m{#1}}
\newcolumntype{N}{@{}m{0pt}@{}}
\begin{document}

\title{Safe Exit Controllers Synthesis for Continuous-time Stochastic Systems}

\author{Bai Xue
\thanks{State Key Lab. of Computer Science, Institute of Software, CAS, Beijing, China \{xuebai@ios.ac.cn\} 
}
\thanks{University of Chinese Academy of Sciences, Beijing, China}
}

\maketitle
\thispagestyle{empty}
\pagestyle{empty}

\begin{abstract}
This paper tackles the problem of generating safe exit controllers for continuous-time systems described by stochastic differential equations (SDEs). The primary aim is to develop controllers that maximize the lower bounds of the exit probability that the system escapes from a safe but uncomfortable set within a specified time frame and guide it towards a comfortable set. The paper considers two distinct cases: one in which the boundary of the safe set is a subset of the boundary of the uncomfortable set, and the other where the boundaries of the two sets do not intersect. To begin, we present a sufficient condition for establishing lower bounds on the exit probability in the first case. This condition serves as a guideline for constructing an online linear programming problem. The linear programming problem is designed to implicitly synthesize an optimal exit controller that maximizes the lower bounds of the exit probability. The method employed in the first case is then extended to the second one. Finally, we demonstrate the effectiveness of the proposed approaches on one example.
\end{abstract}

\section{Introduction}
\label{sec:intro}
Stochastic systems are highly significant in various fields such as robotics, finance, and biology due to their ability to model uncertain factors that can greatly influence system behavior. Stochastic differential equations (SDEs) provide a powerful modeling approach for such systems as they allow for the incorporation of inherent uncertainties in system dynamics \cite{kloeden1992stochastic}. This enables the analysis of system behavior, as well as the verification of properties related to safety, reliability, and performance.

In recent years, there has witnessed an increased focus on safety properties \cite{franzle2011measurability,tomlin2003computational,abate2010approximate}, particularly in the context of safety-critical systems. Safety verification via barrier certificates for stochastic systems with infinite time horizons was introduced in \cite{prajna2007framework} alongside the deterministic counterpart. This framework builds upon the known Doob's nonnegative supermartingale inequality (or, Ville's inequality \cite{ville1939etude}) and enables bounding the exit probability from above, indicating the likelihood of a system leaving a safe region. However, this approach has a limitation as it requires the infinitesimal generator, responsible for the expected value evolution of a stochastic process, to be non-positive. Consequently, the barrier function is restricted to be a supermartingale. To overcome this restriction, \cite{steinhardt2012finite} relaxed the condition by introducing barrier certificates based on c-martingales. A c-martingale allows the expected value of the barrier function to increase over time while providing an upper bound on the infinitesimal generator. This approach provides upper bounds of the exit probability for systems with finite time horizons. Afterwards, inspired by studies in \cite{kushner1967stochastic}, \cite{santoyo2021barrier} enhanced the c-martingales and proposed a barrier certificate constraint that imposes a state-dependent bound on the infinitesimal generator for upper-bounding the exit probability with finite time horizons. Moreover, a  sum-of-squares optimization based method was proposed in \cite{santoyo2021barrier}
to synthesizing polynomial state feedback controllers. Further contributions to the computation of upper bounds of the exit probability include \cite{nilsson2020lyapunov}, which presented a comparison theorem for one-dimensional SDEs and applied it to upper-bound exit probabilities for multi-dimensional SDEs in terms of an exit probability of a one-dimensional process. Recently, based on online convex quadratic programs that synthesize controllers implicitly \cite{freeman1996inverse,ames2016control}, \cite{wang2021safety} introduced stochastic control barrier functions as a framework for synthesizing controllers that enforce upper bounds on exit probabilities over both infinite and finite time horizons. The conditions for upper-bounding exit probabilities in the aforementioned works, except \cite{nilsson2020lyapunov}, are constructed or derived from the  Doob's nonnegative supermartingale inequality. 

On the other hand, in \cite{xue2021reach}, a novel approach was proposed for characterizing the exact reachability probability of discrete-time stochastic systems. This probability measures the likelihood of a system starting from an initial set and eventually entering target sets, while staying within safe sets before the first target hitting time. Unlike previous methods that rely on Doob's nonnegative supermartingale inequality, this approach derives an equation that provides an exact estimation of the reachability probability \cite{xue2023reachability}. By relaxing this equation, barrier-like conditions can be obtained to both lower-bound and upper-bound the reachability probability. Additionally, the method has been extended in \cite{xue2023cdc} to compute lower and upper bounds of the exit probability over an infinite time horizon for discrete-time stochastic systems. Furthermore, the equation and its relaxations have been further extended in \cite{xue2022reach} to perform reach-avoid analysis over infinite-time horizons for systems modeled by SDEs. The use of sum-of-squares optimization techniques has enabled the application of these barrier-like conditions in the synthesis of controllers for safety-critical systems, as in \cite{xue2023reach}.

 In safety-focused applications, it is common to prioritize the computation of upper bounds for the exit probability from a safe set. However, there is a significant lack of methods specifically focused on computing lower bounds, despite their significance in certain practical scenarios. Consider a situation where a system operates within a safe set but experiences discomfort, such as a robotic system navigating around the boundary of the safe set.  Although the system is safe, it may encounter discomfort due to the fragility of safety violations. In this situation, the system would prefer to leave this typical safe set to alleviate the discomfort. By maximizing lower bounds of the exit probability, we can ensure that the system has a higher probability of safely leaving this uncomfortable set and reaching a safe set that provides more comfort. It not only ensures safety but also considers comfort, resulting in a more holistic solution for safety-focused applications. This aspect becomes increasingly important for systems like autonomous vehicles, where comfort plays a substantial role once safety requirements are met. Additionally, considering lower bounds can complement existing methods that focus on computing upper bounds of the exit probability, and thus can provide us a more comprehensive analysis of the system's behavior.

In this paper, we investigate the problem of generating safe controllers that optimize the lower bounds of exit probabilities for continuous-time systems represented by SDEs. The exit probability refers to the likelihood of a system, starting from an open, safe but uncomfortable set (which is a subset of the safe set), exiting that set within a specified time frame and entering a comfortable set. This time frame can either be finite or infinite. We analyze two different cases in this study. In the first case, the boundary of the safe set is a subset of the boundary of the uncomfortable set. We begin by establishing a sufficient condition for lower-bounding the exit probability in this case, extending the condition presented in \cite{xue2023reach}. Based on the proposed sufficient condition, we formulate an online linear programming problem to synthesize an optimal controller implicitly that maximizes lower bounds of the exit probability. Then we extend the sufficient condition and linear programming method in the first case to the second one, in which the boundary of the safe set does not intersect with the boundary of the uncomfortable set. Finally, to illustrate the effectiveness of our proposed methods, we provide an example application and demonstrate their applicability.

The main contribution of our work is summarized as follows: unlike previous studies that primarily focused on synthesizing controllers to enforce upper bounds on the exit probability for systems modeled by SDEs, the present work introduces novel conditions for controller synthesis that specifically provide lower bounds of the exit probability. These conditions are applicable to both finite and infinite time horizons in exit analysis. One key aspect of our contribution is that our proposed conditions not only extend the existing condition presented in \cite{xue2023reach} to the finite-time scenario but also encompass it as a special case within our framework. This demonstrates the versatility of our conditions in handling a wider range of scenarios compared to the one in \cite{xue2023reach}.

This paper is structured as follows. In Section \ref{sec:pre}, we introduce SDEs and the problems of synthesizing safe exit controllers. In Section \ref{sec:ecs}, we present our sufficient conditions for characterizing lower bounds of the exit probabilities
and our linear programming methods for synthesizing controllers that maximize these lower bounds. In Section \ref{sec:expe}, we demonstrate the effectiveness of our approach through one example. Finally, in Section \ref{sec:con}, we conclude the paper and discuss avenues for future research.

Some basic notions are used in this paper: $\mathbb{R}$ and $\mathbb{R}_{\geq 0}$ stand for the set of real numbers and non-negative real numbers, respectively; $\mathbb{R}^{n}$ and $\mathbb{R}^{n\times m}$ denote the space of all $n$-dimensional vectors and $n\times m$ real matrices, respectively; for a set $\mathcal{A}$, $\overline{\mathcal{A}}$ and $\partial \mathcal{A}$ denotes the closure and boundary of the set $\mathcal{A}$, respectively; $\wedge$ denotes the logical operation of conjunction.

\section{Preliminaries}
\label{sec:pre}

This section introduces SDEs and the exit controllers synthesis problem of interest.

Consider an affine stochastic control system, 
\begin{equation}
    \label{SDE}
    \begin{split}
d\bm{x}(t,\bm{w})=(\bm{f}_1(\bm{x}(t,\bm{w}))+&\bm{f}_2(\bm{x}(t,\bm{w}))\bm{u}(\bm{x}(t)))dt\\
&+\bm{\sigma}(\bm{x}(t,\bm{w})) d\bm{W}(t,\bm{w}),
\end{split}
\end{equation}
where $\bm{f}_1(\cdot):\mathbb{R}^n\rightarrow \mathbb{R}^n$, $\bm{f}_2(\cdot): \mathbb{R}^n \rightarrow \mathbb{R}^{n\times m}$, and $\bm{\sigma}(\cdot): \mathbb{R}^n \rightarrow \mathbb{R}^{n\times k}$ are locally Lipschitz continuous function; the admissible input is defined by the function $\bm{u}(\cdot): \mathbb{R}^n \rightarrow \mathcal{U}$ with $\mathcal{U}$ being the admissible input set; $\bm{W}(t,\bm{w}): \mathbb{R}\times \Omega\rightarrow \mathbb{R}^k$ is an $k$-dimensional Wiener process (standard Brownian motion), and $\Omega$, equipped with the probability measure $\mathbb{P}$,  is the sample space $\bm{w}$ belongs to.  The expectation with respect to $\mathbb{P}$ is denoted by $\mathbb{E}[\cdot]$.

Given a locally Lipschitz controller $\bm{u}(\bm{x})$, for an initial state $\bm{x}_0$, the SDE \eqref{SDE} has a unique (maximal local) strong solution over a time interval $[0,T^{\bm{x}_0}(\bm{w}))$, where $T^{\bm{x}_0}(\bm{w})$ is a positive real value or infinity. This solution is denoted as $\bm{\phi}_{\bm{x}_0}^{\bm{w}}(\cdot): [0,T^{\bm{x}_0}(\bm{w}))\rightarrow \mathbb{R}^n$, which satisfies the stochastic integral equation,
\begin{equation*}
\begin{split}
  \bm{\phi}_{\bm{x}_0}^{\bm{w}}(t)&=\int_{0}^t (\bm{f}_1(\bm{\phi}_{\bm{x}_0}^{\bm{w}}(\tau))+\bm{f}_2(\bm{\phi}_{\bm{x}_0}^{\bm{w}}(\tau))\bm{u}(\bm{\phi}_{\bm{x}_0}^{\bm{w}}(\tau)))d \tau\\
   &+\int_{0}^t \bm{\sigma}(\bm{\phi}_{\bm{x}_0}^{\bm{w}}(\tau)) d\bm{W}(\tau,\bm{w})+\bm{x}_0.
   \end{split}
\end{equation*}

The infinitesimal generator underlying system \eqref{SDE}, which represents the limit of the expected value of $v(\bm{\phi}_{\bm{x}_0}^{\bm{w}}(t))$ as $t$ approaches 0, is presented in Definition \ref{inf_generator}.
\begin{definition}
\label{inf_generator}
Given system \eqref{SDE}  with a locally Lipschitz controller $\bm{u}(\bm{x})$,  the infinitesimal generator of a twice continuously differentiable function  $v(\bm{x})$ is defined by 
\begin{equation*}
    \begin{split}
        &\mathcal{L}_{v,\bm{u}}(\bm{x}_0)=\lim_{t\rightarrow 0}\frac{\mathbb{E}[v(\bm{\phi}_{\bm{x}_0}^{\bm{w}}(t))]-v(\bm{x}_0)}{t}=\\
        &[\frac{\partial v}{\partial \bm{x}}(\bm{f}_1(\bm{x})+\bm{f}_2(\bm{x})\bm{u}(\bm{x}))+\frac{1}{2}\textbf{tr}(\bm{\sigma}(\bm{x})^{\top}\frac{\partial^2 v}{\partial \bm{x}^2} \bm{\sigma}(\bm{x}))]\mid_{\bm{x}=\bm{x}_0},
    \end{split}
\end{equation*}
where $\frac{\partial v}{\partial \bm{x}}$ represents the gradient of the function $v(\bm{x})$ with respect to $\bm{x}$, and $\textbf{tr}(\cdot)$ denotes the trace of a matrix.
\end{definition}

Given a safe set $\mathcal{S}\subseteq \mathbb{R}^n$ and an  uncomfortable set $\mathcal{C} \subseteq \mathcal{S}$, a safe exit controller is a controller that maximizes the exit probability of system \eqref{SDE}, starting from $\mathcal{C}$, entering the comfortable set $\mathcal{S}\setminus \mathcal{C}$ within a specified time horizon. Additionally, it is required that the system remains inside $\mathcal{C}$ before leaving it.

\begin{definition}[Safe Exit Controllers]
\label{RAC_s}
Given a time horizon $\mathbb{T}$, an initial state $\bm{x}_0\in \mathcal{C}$ and a probability threshold $p_{\bm{x}_0}\in[0,1]$, an exit controller is a locally Lipschitz controller $\bm{u}(\cdot): \overline{\mathcal{C}}\rightarrow \mathbb{R}^m$ satisfying the following condition:
\begin{equation}
\mathbb{P}\Bigg(\Big\{\bm{w}\in \Omega \mid
\begin{aligned}
&\exists t\in \mathbb{T}. \bm{\phi}_{\bm{x}_0}^{\bm{w}}(t) \in \mathcal{S}\setminus \mathcal{C}\bigwedge\\
&\forall \tau \in [0,t). \bm{\phi}_{\bm{x}_0}^{\bm{w}}(\tau) \in \mathcal{C}
\end{aligned}
\Big\}
\Bigg)\geq p_{\bm{x}_0},
\end{equation}
where $\mathbb{T}=[0,T]$ if $T<\infty$, and $\mathbb{T}=[0,\infty)$ otherwise.
\end{definition}

In Definition \ref{RAC_s}, the exit controller is related to a lower bound of the exact exit probability. The safe exit controllers synthesis problem of interest in this work is to synthesize an exit controller maximizing the threshold $p_{\bm{x}_0}$. The safe exit controller synthesis problem in this paper is considered in the following two distinct cases.

The first case we consider is that the boundary of the safe set $\mathcal{S}$ is a subset of the one of the uncomfortable set $\mathcal{C}$, i.e., $\partial \mathcal{S} \subseteq \partial \mathcal{C}$. Specifically, we assume $\mathcal{S}=\{\bm{x}\in \mathbb{R}^n\mid h(\bm{x})>0\}$ with $\partial \mathcal{S}=\{\bm{x}\in \mathbb{R}^n\mid h(\bm{x})=0\}$ and $\mathcal{C}=\{\bm{x}\in \mathbb{R}^n\mid 0<h(\bm{x})<1\}$ with $\partial \mathcal{C}=\{\bm{x}\in \mathbb{R}^n\mid h(\bm{x})=0\vee h(\bm{x})=1\}$. 
This assumption is made based on the practical consideration that a system operating close to the boundary of a safe set is at a higher risk of safety hazards, thereby making the system operation in this set uncomfortable. In this case, system \eqref{SDE} should be enforced to exit the set $\mathcal{C}$ through states satisfying $h(\bm{x})=1$ rather than $h(\bm{x})=0$. Thus, that $\exists t\in \mathbb{T}. \bm{\phi}_{\bm{x}_0}^{\bm{w}}(t) \in \mathcal{S}\setminus \mathcal{C} \wedge  \forall \tau \in [0,t). \bm{\phi}_{\bm{x}_0}^{\bm{w}}(\tau) \in \mathcal{C}$ is equivalent to 
$\exists t\in \mathbb{T}. h(\bm{\phi}_{\bm{x}_0}^{\bm{w}}(t))=1 \wedge  \forall \tau \in [0,t). \bm{\phi}_{\bm{x}_0}^{\bm{w}}(\tau) \in \mathcal{C}$. The corresponding exit controllers synthesis problem  is formulated in Definition \ref{RAC_p}.

\begin{definition}[Safe Exit Controllers Synthesis Problem I]
\label{RAC_p}
 Assume the safe set is $\mathcal{S}=\{\bm{x}\in \mathbb{R}^n\mid h(\bm{x})>0\}$ with $\partial \mathcal{S}=\{\bm{x}\in \mathbb{R}^n\mid h(\bm{x})=0\}$ and the uncomfortable set $\mathcal{C}=\{\bm{x}\in \mathbb{R}^n\mid 0<h(\bm{x})<1\}$ with $\partial \mathcal{C}=\{\bm{x}\in \mathbb{R}^n\mid h(\bm{x})=0\vee h(\bm{x})=1\}$, where $h(\cdot): \mathbb{R}^n \rightarrow \mathbb{R}$ is a twice continuously differentiable function. Given a time horizon $\mathbb{T}$, the safe exit controllers synthesis problem is to synthesize a locally Lipschitz controller $\bm{u}(\cdot): \overline{\mathcal{C}}\rightarrow \mathbb{R}^m$ of maximizing lower bounds of the exit probability for system \eqref{SDE} leaving the set $\mathcal{C}$ through states in $\{\bm{x}\in \mathbb{R}^n\mid h(\bm{x})=1\}$, i.e.,  solving the following optimization problem:
\begin{equation}
  \begin{split}
    &\max_{\bm{u}}  p_{\bm{x}_0}\\
    &\text{s.t.~} \mathbb{P}\Bigg(\Big\{\bm{w}\in \Omega \mid
\begin{split}
&\exists t\in \mathbb{T}. h(\bm{\phi}_{\bm{x}_0}^{\bm{w}}(t))=1\bigwedge\\
&\forall \tau \in [0,t). \bm{\phi}_{\bm{x}_0}^{\bm{w}}(\tau) \in \mathcal{C}
\end{split}
\Big\}
\Bigg)\geq p_{\bm{x}_0},
    \end{split}
\end{equation}
where $\mathbb{T}=[0,T]$ if $T<\infty$, and $\mathbb{T}=[0,\infty)$ otherwise.
\end{definition}

The second case we consider is that the boundary of the uncomfortable set $\mathcal{C}$ does not intersect the boundary of the safe set $\mathcal{S}$, i.e., $\partial \mathcal{S} \cap \partial \mathcal{C}=\emptyset$. In this case, we assume $\mathcal{S}=\{\bm{x}\in \mathbb{R}^n\mid h(\bm{x})>0\}$ with $\partial \mathcal{S}=\{\bm{x}\in \mathbb{R}^n\mid h(\bm{x})=0\}$ and $\mathcal{C}=\{\bm{x}\in \mathbb{R}^n\mid g(\bm{x})<1\}$ with $\partial \mathcal{C}=\{\bm{x}\in \mathbb{R}^n\mid g(\bm{x})=1\}$. 
In this case, that $\exists t\in [0,T]. \bm{\phi}_{\bm{x}_0}^{\bm{w}}(t) \in \mathcal{S}\setminus \mathcal{C} \wedge  \forall \tau \in [0,t). \bm{\phi}_{\bm{x}_0}^{\bm{w}}(\tau) \in \mathcal{C}$ is equivalent to  
$\exists t\in [0,T]. \bm{\phi}_{\bm{x}_0}^{\bm{w}}(t)\in \partial \mathcal{C} \wedge  \forall \tau \in [0,t). \bm{\phi}_{\bm{x}_0}^{\bm{w}}(\tau) \in \mathcal{C}$. Thus, the corresponding safe exit controllers synthesis problem is formulated in Definition \ref{RAC_p1}.

\begin{definition}[Safe Exit Controllers Synthesis Problem II]
\label{RAC_p1}
Assume the uncomfortable set is $\mathcal{C}=\{\bm{x}\in \mathbb{R}^n\mid g(\bm{x})<1\}$ with $\partial \mathcal{C}=\{\bm{x}\in \mathbb{R}^n\mid g(\bm{x})=1\}$ and $\partial \mathcal{S}\cap \partial \mathcal{C}=\emptyset$, where $g(\cdot): \mathbb{R}^n \rightarrow \mathbb{R}$ is a twice continuously differentiable function. Given a time horizon $\mathbb{T}$, the safe exit controllers synthesis problem is to synthesize a locally Lipschitz controller $\bm{u}(\cdot): \overline{\mathcal{C}}\rightarrow \mathbb{R}^m$ of maximizing lower bounds of the exit probability, i.e.,  solving the following optimization problem:
\begin{equation}
  \begin{split}
    &\max_{\bm{u}}  p_{\bm{x}_0}\\
    &\text{s.t.~} \mathbb{P}\Bigg(\Big\{\bm{w}\in \Omega \mid
\begin{split}
&\exists t\in \mathbb{T}. g(\bm{\phi}_{\bm{x}_0}^{\bm{w}}(t))=1 \wedge \\
&\forall \tau \in [0,t). \bm{\phi}_{\bm{x}_0}^{\bm{w}}(\tau) \in \mathcal{C}
\end{split}
\Big\}
\Bigg)\geq p_{\bm{x}_0},
    \end{split}
\end{equation}
where $\mathbb{T}=[0,T]$ if $T<\infty$, and $\mathbb{T}=[0,\infty)$ otherwise.
\end{definition}

\section{Exit Controllers Synthesis}
\label{sec:ecs}
In this section, we describe our approach to solving the safe exit controllers synthesis problems I and II. We first focus on Problem I in Subsection \ref{sec:ecsc}, where we present a condition that exit controllers satisfy in order to derive lower bounds on the exit probabilities for both infinite and finite time horizons. This condition involves two free parameters that need to be optimized. Then, we extend this condition to Problem II in Subsection \ref{sec:ecec1}. Finally, in Subsection \ref{sec:lpbc}, we construct linear programs to synthesize optimal exit controllers implicitly. By optimizing the two free parameters from the conditions, we can design exit controllers that maximize the lower bounds on the exit probabilities. These linear programs enable us to perform online synthesis of the optimal exit controllers for both Problems I and II.

\subsection{Safe Exit Controllers Synthesis Conditions for Problem I}
\label{sec:ecsc}

This subsection introduces a condition that exit controllers satisfy in order to derive lower bounds on the exit probabilities in Problem I for both infinite and finite time horizons.

The construction of the condition lies on an auxiliary stochastic process $\{\widetilde{\bm{\phi}}_{\bm{x}_0}^{\bm{w}}(t), t\in \mathbb{R}_{\geq 0}\}$ for $\bm{x}_0\in \overline{\mathcal{C}}$ that is a stopped process corresponding to $\{\bm{\phi}_{\bm{x}_0}^{\bm{w}}(t), t\in [0,T^{\bm{x}_0}(\bm{w}))\}$ and the set $\mathcal{C}$, i.e., 
\begin{equation}
\widetilde{\bm{\phi}}_{\bm{x}_0}^{\bm{w}}(t)=
\begin{cases}
&\bm{\phi}_{\bm{x}_0}^{\bm{w}}(t), \text{\rm~if~}t<\tau^{\bm{x}_0}(\bm{w}),\\
&\bm{\phi}_{\bm{x}_0}^{\bm{w}}(\tau^{\bm{x}_0}(\bm{w})), \text{\rm~if~}t\geq \tau^{\bm{x}_0}(\bm{w}),
\end{cases}
\end{equation}
where \[\tau^{\bm{x}_0}(\bm{w})=\inf\{t\mid \bm{\phi}_{\bm{x}_0}^{\bm{w}}(t) \in \partial \mathcal{C}\}\] is the first time of exit of $\bm{\phi}_{\bm{x}_0}^{\bm{w}}(t)$ from the open set $\mathcal{C}$. It is worth remarking here that if the path $\bm{\phi}_{\bm{x}_0}^{\bm{w}}(t)$ escapes to infinity in finite time, it must touch the boundary of the set $\mathcal{C}$ and thus $\tau^{\bm{x}_0}(\bm{w})\leq T^{\bm{x}_0}(\bm{w})$. The stopped process $\widehat{\bm{\phi}}_{\bm{x}_0}^{\bm{w}}(t)$ inherits the right continuity and strong Markovian property of $\bm{\phi}_{\bm{x}_0}^{\bm{w}}(t)$. Moreover, the infinitesimal generator corresponding to $\widehat{\bm{\phi}}_{\bm{x}_0}^{\bm{w}}(t)$ is identical to the one corresponding to $\bm{\phi}_{\bm{x}_0}^{\bm{w}}(t)$ over $\mathcal{X}$, and is equal to zero on the boundary $\partial \mathcal{C}$ \cite{kushner1967}. That is, for $v(\bm{x})$ being a twice continuously differentiable function, 
\[
\begin{split}
\widetilde{\mathcal{L}}_{v,\bm{u}}(\bm{x})=\mathcal{L}_{v,\bm{u}}(\bm{x})&=\frac{\partial v}{\partial \bm{x}}(\bm{f}_1(\bm{x})+\bm{f}_2(\bm{x})\bm{u}(\bm{x}))\\
&+\frac{1}{2}\textbf{tr}(\bm{\sigma}(\bm{x})^{\top}\frac{\partial^2 v}{\partial \bm{x}^2} \bm{\sigma}(\bm{x}))
\end{split}
\] for $\bm{x}\in \mathcal{C}$ and $\widetilde{\mathcal{L}}_{v,\bm{u}}(\bm{x})=0$ for $\bm{x}\in \partial \mathcal{C}$.

The probability of reaching the set $\mathcal{C}_1$ within the time horizon $\mathbb{T}=[0,T]$ for system \eqref{SDE} while staying inside the set $\mathcal{C}$ before the first time of hitting $\mathcal{C}_1$, is equal to the probability of reaching the set $\mathcal{C}_1$ at the time instant $T$ for the auxiliary stochastic process, where $\mathcal{C}_1=\{\bm{x}\in \mathbb{R}^n \mid h(\bm{x})=1\}$.
\begin{lemma}
\label{equiv}
    Given a time instant $T>0$ and $\bm{x}_0\in \mathcal{C}$, \[
    \begin{split}
   & \mathbb{P}(\exists t\in [0,T]. \bm{\phi}_{\bm{x}_0}^{\bm{w}}(t)\in \mathcal{C}_1 \wedge 
 \forall \tau\in [0,t). \bm{\phi}_{\bm{x}_0}^{\bm{w}}(\tau)\in \mathcal{C})\\
 &=\mathbb{P}(\widetilde{\bm{\phi}}_{\bm{x}_0}^{\bm{w}}(T)\in \mathcal{C}_1)=\mathbb{E}[1_{\mathcal{C}_1}(\widetilde{\bm{\phi}}_{\bm{x}_0}^{\bm{w}}(T))].
 \end{split}
 \] Moreover, for any $0<T_1\leq T_2$, \[\mathbb{P}(\widetilde{\bm{\phi}}_{\bm{x}_0}^{\bm{w}}(T_1)\in \mathcal{C}_1)\leq \mathbb{P}(\widetilde{\bm{\phi}}_{\bm{x}_0}^{\bm{w}}(T_2)\in \mathcal{C}_1),\]
where $\mathcal{C}_1=\{\bm{x}\in \mathbb{R}^n \mid h(\bm{x})=1\}$. 
\end{lemma}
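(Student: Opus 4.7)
The plan is to prove both statements by a sample-wise set-theoretic argument, using two structural facts: $\mathcal{C}_1 \subseteq \partial \mathcal{C}$ so $\mathcal{C} \cap \mathcal{C}_1 = \emptyset$ (since $\mathcal{C}$ is open), and the stopped process $\widetilde{\bm{\phi}}_{\bm{x}_0}^{\bm{w}}(\cdot)$ is frozen at $\bm{\phi}_{\bm{x}_0}^{\bm{w}}(\tau^{\bm{x}_0}(\bm{w}))$ once it hits $\partial \mathcal{C}$. The excerpt also guarantees $\tau^{\bm{x}_0}(\bm{w}) \leq T^{\bm{x}_0}(\bm{w})$, so the stopped process is almost surely well-defined for all $t \geq 0$.

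For the first equality, I would fix $\bm{w}$ and argue equality of the two underlying events. In the forward direction, suppose some $t \in [0,T]$ satisfies $\bm{\phi}_{\bm{x}_0}^{\bm{w}}(t) \in \mathcal{C}_1$ with $\bm{\phi}_{\bm{x}_0}^{\bm{w}}(\tau) \in \mathcal{C}$ for every $\tau \in [0,t)$. Since $\bm{\phi}_{\bm{x}_0}^{\bm{w}}(t) \in \mathcal{C}_1 \subseteq \partial \mathcal{C}$ and $\mathcal{C}$ is open, the infimum defining $\tau^{\bm{x}_0}(\bm{w})$ is exactly $t$, hence $\widetilde{\bm{\phi}}_{\bm{x}_0}^{\bm{w}}(T) = \bm{\phi}_{\bm{x}_0}^{\bm{w}}(t) \in \mathcal{C}_1$. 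Conversely, if $\widetilde{\bm{\phi}}_{\bm{x}_0}^{\bm{w}}(T) \in \mathcal{C}_1$, then one cannot have $\tau^{\bm{x}_0}(\bm{w}) > T$, because in that case $\widetilde{\bm{\phi}}_{\bm{x}_0}^{\bm{w}}(T) = \bm{\phi}_{\bm{x}_0}^{\bm{w}}(T) \in \mathcal{C}$, contradicting $\mathcal{C}\cap \mathcal{C}_1 = \emptyset$; therefore $\tau^{\bm{x}_0}(\bm{w}) \leq T$ and taking $t = \tau^{\bm{x}_0}(\bm{w})$ recovers a witness, with $\bm{\phi}_{\bm{x}_0}^{\bm{w}}(\tau) \in \mathcal{C}$ for $\tau < t$ following from the definition of the first exit time. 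The equality $\mathbb{P}(\widetilde{\bm{\phi}}_{\bm{x}_0}^{\bm{w}}(T) \in \mathcal{C}_1) = \mathbb{E}[1_{\mathcal{C}_1}(\widetilde{\bm{\phi}}_{\bm{x}_0}^{\bm{w}}(T))]$ is then just the definition of probability as an expectation of an indicator.

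For the monotonicity, if $\widetilde{\bm{\phi}}_{\bm{x}_0}^{\bm{w}}(T_1) \in \mathcal{C}_1$, the same argument as above forces $\tau^{\bm{x}_0}(\bm{w}) \leq T_1 \leq T_2$, so by the definition of the stopped process $\widetilde{\bm{\phi}}_{\bm{x}_0}^{\bm{w}}(T_2) = \bm{\phi}_{\bm{x}_0}^{\bm{w}}(\tau^{\bm{x}_0}(\bm{w})) = \widetilde{\bm{\phi}}_{\bm{x}_0}^{\bm{w}}(T_1) \in \mathcal{C}_1$. This is a set inclusion of events, yielding the desired probability inequality. The only delicate point I anticipate is bookkeeping around the case where the original process explodes or exits through $\{\bm{x} \mid h(\bm{x}) = 0\}$ rather than through $\mathcal{C}_1$; both are handled cleanly by the disjointness $\mathcal{C} \cap \mathcal{C}_1 = \emptyset$ and by the almost-sure bound $\tau^{\bm{x}_0}(\bm{w}) \leq T^{\bm{x}_0}(\bm{w})$ already noted in the excerpt, so I do not expect any substantial obstacle beyond carefully matching the two events on $\Omega$.
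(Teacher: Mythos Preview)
Your proposal is correct and follows essentially the same approach as the paper: a sample-wise argument establishing equality (respectively inclusion) of the underlying events in $\Omega$, then passing to probabilities. The paper's own proof merely asserts these set relations as ``easy to observe,'' whereas you have spelled out the two-way inclusion and the monotonicity argument in detail; the content is the same.
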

\begin{proof}
 It is easy to observe that $\{\bm{w}\in \Omega\mid \exists t\in [0,T]. \bm{\phi}_{\bm{x}_0}^{\bm{w}}(t)\in \mathcal{C}_1 \wedge 
 \forall \tau\in [0,t). \bm{\phi}_{\bm{x}_0}^{\bm{w}}(\tau)\in \mathcal{C}\}=\{\bm{w}\in \Omega\mid \widetilde{\bm{\phi}}_{\bm{x}_0}^{\bm{w}}(T)\in \mathcal{C}_1\}$.  Therefore, the conclusion holds. 

 In addition, we observe that for $T_1\leq T_2$,  
 \[\{\bm{w}\in \Omega\mid \widetilde{\bm{\phi}}_{\bm{x}_0}^{\bm{w}}(T_1)\in \mathcal{C}_1\}\subseteq \{\bm{w}\in \Omega\mid \widetilde{\bm{\phi}}_{\bm{x}_0}^{\bm{w}}(T_2)\in \mathcal{C}_1\}.\] Consequently, \[\mathbb{P}(\widetilde{\bm{\phi}}_{\bm{x}_0}^{\bm{w}}(T_1)\in \mathcal{C}_1)\leq \mathbb{P}(\widetilde{\bm{\phi}}_{\bm{x}_0}^{\bm{w}}(T_2)\in \mathcal{C}_1)\] holds for $T_1\leq T_2$.
\end{proof}

\begin{remark}
The conclusion that 
\begin{equation*}
\begin{split}
&\mathbb{P}(\exists t\in [0,\infty). \bm{\phi}_{\bm{x}_0}^{\bm{w}}(t)\in \mathcal{C}_1 \wedge \forall \tau \in [0,t). \bm{\phi}_{\bm{x}_0}^{\bm{w}}(\tau)\in \mathcal{C})\\
&=\lim_{t\rightarrow \infty}\mathbb{P}(\widetilde{\bm{\phi}}_{\bm{x}_0}^{\bm{w}}(t)\in \mathcal{C}_1)
\end{split}
\end{equation*}
is shown in \cite{xue2022reach}, where $\bm{x}_0\in \mathcal{C}$.
\end{remark}

Based on the auxiliary stochastic process defined above, a condition can be straightforwardly obtained from Proposition 3 in \cite{xue2023reach} to lower-bound the exit probability over the infinite time horizon. 

\begin{lemma}
\label{previous}
If there exists a locally Lipschitz controller $\bm{u}(\cdot): \overline{\mathcal{C}}\rightarrow \mathcal{U}$ satisfying the following condition:
\begin{equation}
\label{stochastic_c_e00}
\mathcal{L}_{h,\bm{u}}(\bm{x})\geq  a h(\bm{x}),  \forall \bm{x}\in \mathcal{C},
\end{equation}
where $a>0$, then \[\mathbb{P}(\exists t\geq 0. h(\bm{\phi}_{\bm{x}_0}^{\bm{w}}(t))=1 \wedge \forall \tau \in [0,t). \bm{\phi}_{\bm{x}_0}^{\bm{w}}(\tau) \in \mathcal{C})\geq h(\bm{x}_0).\]
\end{lemma}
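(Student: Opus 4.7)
My plan is to introduce the auxiliary process $M_t=e^{-at}h(\widetilde{\bm{\phi}}_{\bm{x}_0}^{\bm{w}}(t))$ and show that it is a bounded submartingale, then pass to the limit $t\to\infty$ to lower-bound the exit probability of interest by $h(\bm{x}_0)$.

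First, I would set up notation: write $\tau=\tau^{\bm{x}_0}(\bm{w})$ for the first exit time of the unstopped process $\bm{\phi}_{\bm{x}_0}^{\bm{w}}$ from $\mathcal{C}$, note via the remark already stated in the excerpt that $\tau\le T^{\bm{x}_0}(\bm{w})$, and recall that on $\mathcal{C}$ the generator of the stopped process agrees with $\mathcal{L}_{h,\bm{u}}$, while on $\partial\mathcal{C}$ the stopped generator vanishes. Applying It\^o's formula to $e^{-at}h(\widetilde{\bm{\phi}}_{\bm{x}_0}^{\bm{w}}(t))$ and using $\mathcal{L}_{h,\bm{u}}(\bm{x})\ge ah(\bm{x})$ on $\mathcal{C}$ together with $\widetilde{\mathcal{L}}_{h,\bm{u}}=0$ and $h\ge 0$ on $\partial\mathcal{C}$, I would show that the drift of $M_t$ is non-negative (on $\mathcal{C}$ the contribution is $e^{-at}(\mathcal{L}_{h,\bm{u}}-ah)\ge 0$, and on $\partial\mathcal{C}$ the stopped dynamics freeze $h$ while the factor $e^{-at}$ only multiplies a non-negative constant, so the process becomes deterministic and non-increasing only on the set where $h=1$; the right object is therefore the stopped submartingale obtained by replacing the time $t$ with $t\wedge\tau$). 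Concretely, I would prefer to work with $M_t=e^{-a(t\wedge\tau)}h(\bm{\phi}_{\bm{x}_0}^{\bm{w}}(t\wedge\tau))$, which by It\^o's formula satisfies
\begin{equation*}
M_t=h(\bm{x}_0)+\int_0^{t\wedge\tau}e^{-as}\bigl(\mathcal{L}_{h,\bm{u}}-ah\bigr)(\bm{\phi}_{\bm{x}_0}^{\bm{w}}(s))\,ds+N_t,
\end{equation*}
with $N_t$ a local martingale; since $h$ is bounded on $\overline{\mathcal{C}}$ by $1$ and $\mathcal{L}_{h,\bm{u}}-ah\ge 0$ on $\mathcal{C}$, $M_t$ is a bounded non-negative submartingale.

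Next, I would take $t\to\infty$. By the martingale convergence theorem, $M_\infty:=\lim_{t\to\infty}M_t$ exists almost surely, and by bounded convergence $\mathbb{E}[M_\infty]\ge M_0=h(\bm{x}_0)$. I would then analyze $M_\infty$ pathwise: on $\{\tau=\infty\}$ the exponential factor $e^{-at}\to 0$ while $h\le 1$, so $M_\infty=0$; on $\{\tau<\infty,\,h(\bm{\phi}_{\bm{x}_0}^{\bm{w}}(\tau))=0\}$ the $h$ factor forces $M_\infty=0$; on $\{\tau<\infty,\,h(\bm{\phi}_{\bm{x}_0}^{\bm{w}}(\tau))=1\}$ one gets $M_\infty=e^{-a\tau}\le 1$. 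Hence
\begin{equation*}
h(\bm{x}_0)\le\mathbb{E}[M_\infty]=\mathbb{E}\bigl[e^{-a\tau}\mathbf{1}_{\{\tau<\infty,\,h(\bm{\phi}_{\bm{x}_0}^{\bm{w}}(\tau))=1\}}\bigr]\le\mathbb{P}\bigl(\tau<\infty,\,h(\bm{\phi}_{\bm{x}_0}^{\bm{w}}(\tau))=1\bigr).
\end{equation*}
Finally, I would observe that $\{\tau<\infty,\,h(\bm{\phi}_{\bm{x}_0}^{\bm{w}}(\tau))=1\}$ coincides with the event in the lemma (taking $t=\tau$ gives the required hitting time, and by definition of $\tau$ the trajectory stays in $\mathcal{C}$ on $[0,\tau)$), which concludes the proof.

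The main obstacle will be the careful handling of the boundary and of possible explosion: I need to argue that the local martingale $N_t$ in the decomposition of $M_t$ is in fact a genuine martingale (boundedness of $M_t$ on $[0,t\wedge\tau]$ and a standard localization argument handle this) and that the convergence $M_t\to M_\infty$ exchanges with expectation (dominated convergence suffices because $0\le M_t\le 1$). Once these technicalities are settled, the submartingale inequality $\mathbb{E}[M_\infty]\ge h(\bm{x}_0)$ delivers the result immediately. A minor alternative would be to invoke Proposition~3 of \cite{xue2023reach} directly, but spelling out the It\^o--submartingale argument above keeps the proof self-contained and adapts cleanly to the finite-horizon setting treated later in the paper.
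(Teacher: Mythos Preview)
Your argument is correct. The exponential--weight submartingale $M_t=e^{-a(t\wedge\tau)}h(\bm{\phi}_{\bm{x}_0}^{\bm{w}}(t\wedge\tau))$ is bounded in $[0,1]$, has nonnegative drift by the hypothesis, and its almost sure limit is $e^{-a\tau}\mathbf{1}_{\{\tau<\infty,\,h(\bm{\phi}_{\bm{x}_0}^{\bm{w}}(\tau))=1\}}$, so bounded convergence yields the claimed lower bound. The brief wobble between the two candidate definitions of $M_t$ (with $e^{-at}$ versus $e^{-a(t\wedge\tau)}$) is harmless since you settle on the latter, which is the one that actually stays a submartingale after $\tau$; you might simply start there and drop the first paragraph's detour through the absorbed process.

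As for the comparison: the paper does not prove this lemma at all---it merely invokes Proposition~3 of \cite{xue2023reach}, exactly the alternative you mention at the end. The self-contained technique the paper does deploy (in the proof of Theorem~\ref{exponential_s2}, the $b\ge 0$ generalization) is different from yours: rather than building an exponential Lyapunov submartingale, it works with $\mathbb{E}[h(\widetilde{\bm{\phi}}_{\bm{x}_0}^{\bm{w}}(t))]$ directly, derives an integral inequality via Dynkin's formula, and then applies Gr\"onwall's inequality in integral form together with the monotonicity from Lemma~\ref{equiv}. Your approach is cleaner for the $b=0$ case because the exponential weight exactly cancels the $ah$ term and the limit analysis is transparent; the paper's Gr\"onwall route, by contrast, handles the affine shift $-b$ uniformly and produces the explicit finite-horizon bound $\frac{e^{aT}(h(\bm{x}_0)-b/a)+b/a-1}{(1-b/a)(e^{aT}-1)}$ in one pass, which your method would have to recover by a separate computation.
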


Lemma \ref{previous} introduces a useful condition that includes a free parameter $a$. This condition is designed to establish a lower bound on the exit probability for Problem I over an infinite time horizon. However, the lower bound provided by Lemma \ref{previous} is solely determined by the initial state of the system described in Equation \eqref{SDE}, and it does not rely on the value of $a$. Therefore, optimizing the value of $a$ does not impact the lower bound on the exit probability for Problem I over the infinite time horizon. Moreover, condition \eqref{stochastic_c_e00} may be overly stringent, significantly constraining the feasible space for the controller $\bm{u}(\cdot): \overline{\mathcal{C}}\rightarrow \mathcal{U}$. Below, we will introduce an additional parameter $b$ into condition \eqref{stochastic_c_e00} to establish a more general and less restrictive condition that can provide lower bounds on exit probabilities for both finite and infinite time horizons in Problem I.

\begin{theorem}
\label{exponential_s2}
If there exists a locally Lipschitz controller $\bm{u}(\cdot): \overline{\mathcal{C}}\rightarrow \mathcal{U}$ satisfying the following condition:
\begin{equation}
\label{stochastic_c_e2}
\begin{cases}
&\mathcal{L}_{h,\bm{u}}(\bm{x})\geq  a h(\bm{x})-b,  \forall \bm{x}\in \mathcal{C},\\
&a>b\geq 0,
\end{cases}
\end{equation}
then  
    \[\mathbb{P}_T\geq \max\{0,\frac{e^{aT}(h(\bm{x}_0)-\frac{b}{a})+\frac{b}{a}-1}{(1-\frac{b}{a})(e^{aT}-1)}\}\]
         and \[\mathbb{P}_{\infty}\geq \max\{0,\frac{h(\bm{x}_0)-\frac{b}{a}}{1-\frac{b}{a}}\},\]
where $\mathbb{P}_T=\mathbb{P}(\exists t\in [0,T]. h(\bm{\phi}_{\bm{x}_0}^{\bm{w}}(t))=1\wedge \forall \tau \in [0,t). \bm{\phi}_{\bm{x}_0}^{\bm{w}}(\tau) \in \mathcal{C})$ and 
$\mathbb{P}_{\infty}=\mathbb{P}(\exists t\geq 0. h(\bm{\phi}_{\bm{x}_0}^{\bm{w}}(t))=1\wedge \forall \tau \in [0,t). \bm{\phi}_{\bm{x}_0}^{\bm{w}}(\tau) \in \mathcal{C})$.
\end{theorem}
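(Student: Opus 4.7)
The plan is to introduce the time-weighted test function $V(\bm{x},t) := e^{-at}\bigl(h(\bm{x}) - \tfrac{b}{a}\bigr)$ and follow it along the strong solution $\bm{\phi}_{\bm{x}_0}^{\bm{w}}(t)$ up to the first exit time $\tau^{\bm{x}_0}(\bm{w})$ from $\mathcal{C}$ defined just before Lemma \ref{equiv}. The exponential weight is chosen so that hypothesis \eqref{stochastic_c_e2} forces a clean sign on the drift: a direct computation gives, for $\bm{x}\in\mathcal{C}$,
\[
\partial_t V(\bm{x},t) + e^{-at}\mathcal{L}_{h,\bm{u}}(\bm{x}) = e^{-at}\bigl(\mathcal{L}_{h,\bm{u}}(\bm{x}) - a h(\bm{x}) + b\bigr) \geq 0,
\]
so $V(\bm{\phi}_{\bm{x}_0}^{\bm{w}}(t), t)$ is a local submartingale on $[0,\tau^{\bm{x}_0})$.

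Since $h$ is continuous on $\overline{\mathcal{C}}$, the integrand is uniformly bounded along the stopped path, and a standard localization promotes the local submartingale to a true submartingale. Applying optional sampling at the bounded stopping time $T\wedge\tau^{\bm{x}_0}$ would then yield
\[
\mathbb{E}\Bigl[e^{-a(T\wedge \tau^{\bm{x}_0})}\bigl(h(\bm{\phi}_{\bm{x}_0}^{\bm{w}}(T\wedge\tau^{\bm{x}_0})) - \tfrac{b}{a}\bigr)\Bigr] \geq h(\bm{x}_0) - \tfrac{b}{a}.
\]

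To convert this into the claimed probability bound, I would partition $\Omega$ into three events: $E_1 = \{\tau^{\bm{x}_0}\leq T,\ h(\bm{\phi}_{\bm{x}_0}^{\bm{w}}(\tau^{\bm{x}_0}))=1\}$, $E_0 = \{\tau^{\bm{x}_0}\leq T,\ h(\bm{\phi}_{\bm{x}_0}^{\bm{w}}(\tau^{\bm{x}_0}))=0\}$, and $E_T = \{\tau^{\bm{x}_0} > T\}$. Lemma \ref{equiv} identifies $\mathbb{P}_T = \mathbb{P}(E_1)$. On $E_1$ the integrand is at most $1-\tfrac{b}{a}$ (because $e^{-a\tau}\leq 1$), on $E_0$ it is at most $0$, and on $E_T$ it is at most $e^{-aT}(1-\tfrac{b}{a})$ (because $h<1$ inside $\mathcal{C}$). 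Inserting these bounds, using $\mathbb{P}(E_T)\leq 1-\mathbb{P}(E_1)$, and dividing through by $1-\tfrac{b}{a}>0$ (guaranteed by $a>b\geq 0$), I would solve for $\mathbb{P}(E_1)$ and recover precisely the stated finite-horizon bound; the outer $\max\{0,\cdot\}$ merely absorbs the trivial case $h(\bm{x}_0)\leq \tfrac{b}{a}$.

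For the infinite-horizon statement I would send $T\to\infty$: Lemma \ref{equiv} gives $\mathbb{P}_\infty = \lim_{T\to\infty}\mathbb{P}_T$, and in the finite-horizon estimate the $e^{aT}$ factor dominates numerator and denominator, leaving $(h(\bm{x}_0)-\tfrac{b}{a})/(1-\tfrac{b}{a})$. The most delicate step in the argument is the localization, since $\tau^{\bm{x}_0}$ may be infinite on paths that never exit $\mathcal{C}$; I would handle it by first stopping at $T\wedge\tau^{\bm{x}_0}\wedge\sigma_n$ for a localizing sequence $\sigma_n\uparrow\infty$ and then passing to the limit through dominated convergence, which is legitimate because both $h$ and $e^{-at}$ are bounded on the relevant domain.
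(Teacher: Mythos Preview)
Your argument is correct and reaches exactly the stated bound, but it follows a genuinely different route from the paper's. The paper works with the stopped process $\widetilde{\bm{\phi}}_{\bm{x}_0}^{\bm{w}}$, passes to $\overline{h}=-h$, derives the integral inequality
\[
\mathbb{E}[\overline{h}(\widetilde{\bm{\phi}}_{\bm{x}_0}^{\bm{w}}(T))]\leq \overline{h}(\bm{x}_0)+\int_0^T a\,\mathbb{E}[\overline{h}(\widetilde{\bm{\phi}}_{\bm{x}_0}^{\bm{w}}(t))]\,dt+bT+(a-b)\int_0^T\mathbb{E}[1_{\mathcal{C}_1}(\widetilde{\bm{\phi}}_{\bm{x}_0}^{\bm{w}}(t))]\,dt,
\]
and then applies the integral form of Gr\"onwall's inequality, followed by an explicit evaluation of the resulting convolution integrals and the monotonicity in Lemma~\ref{equiv}, to squeeze out the same numerical expression. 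Your approach instead builds the exponential weight into the test function $V(\bm{x},t)=e^{-at}(h(\bm{x})-b/a)$ from the start---essentially the integrating-factor trick for $y'=ay-b$ lifted to the process level---so that optional sampling and the three-event decomposition $E_1,E_0,E_T$ replace Gr\"onwall and the convolution calculus entirely. This is shorter and makes the contribution of each boundary component transparent; the paper's route stays closer to the stopped-process/generator formalism set up before Lemma~\ref{equiv} and recycles more directly for Theorem~\ref{exponential_s}. One small wording issue: you justify boundedness of $h$ by saying it is ``continuous on $\overline{\mathcal{C}}$'', but $\overline{\mathcal{C}}$ need not be compact; the real reason is simply that $\mathcal{C}=\{0<h<1\}$ forces $0\le h\le 1$ on $\overline{\mathcal{C}}$, which is what your dominated-convergence step actually uses.
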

\begin{proof}
    According to \eqref{stochastic_c_e2}, we have 
    \begin{equation}
\begin{cases}
&\widetilde{\mathcal{L}}_{h,\bm{u}}(\bm{x})+(a-b)1_{\mathcal{C}_1}(\bm{x})\geq  a h(\bm{x})-b,  \forall \bm{x}\in \overline{\mathcal{C}},\\
&a>b\geq 0,
\end{cases}
\end{equation} 
where $\mathcal{C}_1=\{\bm{x}\in \mathbb{R}^n \mid h(\bm{x})=1\}$ and \[\widetilde{\mathcal{L}}_{h,\bm{u}}(\bm{x})=\begin{cases}
    &\mathcal{L}_{v,\bm{u}}(\bm{x}), \text{~if~} \bm{x}\in  \mathcal{C},\\
    &0, ~~~~~~~~~\text{~if~} \bm{x}\in \partial \mathcal{C}.
\end{cases}
\]

Consequently, 
\[
\begin{split}
\mathbb{E}&[h(\widetilde{\bm{\phi}}_{\bm{x}_0}^{\bm{w}}(T))]\geq \int_{0}^T a \mathbb{E}[h(\widetilde{\bm{\phi}}_{\bm{x}_0}^{\bm{w}}(t))]dt+h(\bm{x}_0)\\
&~~-\int_{0}^T b dt-\int_{0}^T (a-b)\mathbb{E}[1_{\mathcal{C}_1}(\widetilde{\bm{\phi}}_{\bm{x}_0}^{\bm{w}}(t))]dt, \forall \bm{x}_0\in \mathcal{C}.
\end{split}
\]
Taking $\overline{h}(\bm{x})=-h(\bm{x})$ over $\bm{x}\in \overline{\mathcal{C}}$, we have 
\[
\begin{split}
\mathbb{E}&[\overline{h}(\widetilde{\bm{\phi}}_{\bm{x}_0}^{\bm{w}}(T))]\leq \int_{0}^T a \mathbb{E}[\overline{h}(\widetilde{\bm{\phi}}_{\bm{x}_0}^{\bm{w}}(t))]dt+\overline{h}(\bm{x}_0)\\
&~~+\int_{0}^T b dt+\int_{0}^T (a-b)\mathbb{E}[1_{\mathcal{C}_1}(\widetilde{\bm{\phi}}_{\bm{x}_0}^{\bm{w}}(t))]dt.
\end{split}
\]

According to Gr\"onwall inequality in the integral form, we further have 
\[
\begin{split}
&\mathbb{E}[\overline{h}(\widetilde{\bm{\phi}}_{\bm{x}_0}^{\bm{w}}(T))]\leq \alpha(T)+\int_{0}^T \alpha(s) a e^{a(T-s)}ds\\
&= \overline{h}(\bm{x}_0)+\int_{0}^T \overline{h}(\bm{x}_0)a e^{a(T-s)} ds +bT+\int_{0}^{T} bsa e^{a(T-s)}ds\\
&~~~~~~~~~~+ (a-b)\int_{0}^T \mathbb{E}[1_{\mathcal{C}_1}(\widetilde{\bm{\phi}}_{\bm{x}_0}^{\bm{w}}(s))] ds \\
&~~~~~~~~~~+a(a-b)\int_{0}^T \int_{0}^s \mathbb{E}[1_{\mathcal{C}_1}(\widetilde{\bm{\phi}}_{\bm{x}_0}^{\bm{w}}(t))] dt e^{a(T-s)} ds    \\
&\leq \overline{h}(\bm{x}_0)e^{aT}-\frac{b}{a}+\frac{b}{a}e^{aT} \\
&~~~~~~~~~~~~~+(a-b)e^{aT}\mathbb{P}(\widetilde{\bm{\phi}}_{\bm{x}_0}^{\bm{w}}(T) \in \mathcal{C}_1)(-\frac{1}{a}e^{-aT}+\frac{1}{a}) 
\end{split}
\]
where $\alpha(s)=\overline{h}(\bm{x}_0)+\int_{0}^s (a-b)\mathbb{E}[1_{ \mathcal{C}_1}(\widetilde{\bm{\phi}}_{\bm{x}_0}^{\bm{w}}(t))]dt+\int_{0}^s b dt$. The last inequality is obtained according to  Lemma \ref{equiv}.

Thus, 
\[
\begin{split}
&-1\leq \mathbb{E}[\overline{h}(\widehat{\bm{\phi}}_{\bm{x}_0}^{\bm{w}}(T))]\leq \overline{h}(\bm{x}_0)e^{aT}-\frac{b}{a}+\frac{b}{a}e^{aT} \\
&~~~~~~~~~~~~~+(a-b)e^{aT}\mathbb{P}(\widetilde{\bm{\phi}}_{\bm{x}_0}^{\bm{w}}(T) \in  \mathcal{C}_1)(-\frac{1}{a}e^{-aT}+\frac{1}{a}) 
\end{split}
\]
After rearrangement, we have the conclusion that $\mathbb{P}(\widetilde{\bm{\phi}}_{\bm{x}_0}^{\bm{w}}(T)\in \mathcal{C}_1)\geq \max\{0,\frac{e^{aT}(h(\bm{x}_0)-\frac{b}{a})+\frac{b}{a}-1}{(1-\frac{b}{a})(e^{aT}-1)}\}$. Furthermore, according to Lemma \ref{equiv}, $\mathbb{P}_{\mathbb{T}}\geq  \max\{0,\frac{e^{aT}(h(\bm{x}_0)-\frac{b}{a})+\frac{b}{a}-1}{(1-\frac{b}{a})(e^{aT}-1)}\}$.

The conclusion that $\mathbb{P}_{\infty}\geq \max\{0,\frac{h(\bm{x}_0)-\frac{b}{a}}{1-\frac{b}{a}}\}$ can be obtained via letting $T$ approach infinity in $\frac{e^{aT}(h(\bm{x}_0)-\frac{b}{a})+\frac{b}{a}-1}{(1-\frac{b}{a})(e^{aT}-1)}$.
\end{proof}

The reason that $b$ is not allowed to be less than zero in Theorem \ref{exponential_s2} lies in (8), since the contradiction that $0\geq -b$ will be obtained over $\{\bm{x}\in \mathbb{R}^n \mid h(\bm{x})=0\}$ if $b\leq 0$.

\subsection{Safe Exit Controllers Synthesis Conditions for Problem II}
\label{sec:ecec1}
This subsection introduces the condition  to derive lower bounds on the exit probabilities in Problem II for both infinite and finite time horizons.

The condition introduced is an extension of the one \eqref{stochastic_c_e2} in Theorem \ref{exponential_s2}. Furthermore, in Problem I, leaving the set $\mathcal{C}$ for system \eqref{SDE} is guaranteed when it hits certain part of its boundary, i.e., $\mathcal{C}_1$. However, in the extended condition, hitting any state in the boundary of the set $\mathcal{C}$ implies that system \eqref{SDE} will leave the set $\mathcal{C}$. To accommodate this situation, the free parameter $b$ in the extended condition is allowed to be smaller than zero. This flexibility allows for a wider range of scenarios to be considered, expanding the feasibility of the condition and providing more general lower bounds on exit probabilities.

\begin{theorem}
\label{exponential_s}
Given a safe but uncomfortable set $\mathcal{C}$ defined  in Section \ref{sec:pre}, if there exists a locally Lipschitz controller $\bm{u}(\cdot): \overline{\mathcal{C}}\rightarrow \mathcal{U}$ satisfying the following condition:
\begin{equation}
\label{stochastic_c_e1}
\begin{cases}
&\mathcal{L}_{g,\bm{u}}(\bm{x})\geq  a g(\bm{x})-b,  \forall \bm{x}\in \mathcal{C},\\
&a>b,
\end{cases}
\end{equation}
then  for $\bm{x}_0\in \mathcal{C}$,
\begin{enumerate}
    \item when $a>0$, \[
    \begin{split}
    &\mathbb{P}(\exists t\in [0,T]. \bm{\phi}_{\bm{x}_0}^{\bm{w}}(t) \in \partial \mathcal{C})\\
    &\geq \max\{0,\frac{e^{aT}(g(\bm{x}_0)-\frac{b}{a})+\frac{b}{a}-1}{(1-\frac{b}{a})(e^{aT}-1)}\}
        \end{split}
        \] and  \[\mathbb{P}(\exists t\in [0,\infty). \bm{\phi}_{\bm{x}_0}^{\bm{w}}(t) \in \partial \mathcal{C})\geq \max\{0,\frac{g(\bm{x}_0)-\frac{b}{a}}{1-\frac{b}{a}}\}.\] 
\item when $a\leq 0$, \[\mathbb{P}(\exists t\in [0,T]. \bm{\phi}_{\bm{x}_0}^{\bm{w}}(t) \in \partial \mathcal{C})\geq \max\{0,1-\frac{g(\bm{x}_0)-1}{(b-a)T}\}\] and $\mathbb{P}(\exists t\in [0,\infty). \bm{\phi}_{\bm{x}_0}^{\bm{w}}(t) \in \partial \mathcal{C})=1.$
\end{enumerate}

\end{theorem}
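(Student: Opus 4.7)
The plan is to adapt the stopped-process, Dynkin, and Grönwall machinery developed for Theorem \ref{exponential_s2}, with two adjustments that account for the different geometry of Problem II (exit can happen anywhere on $\partial\mathcal{C}$) and for the extra subcase introduced by allowing $a\leq 0$.

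First I would introduce the stopped process $\widetilde{\bm{\phi}}_{\bm{x}_0}^{\bm{w}}(t)$ associated with $\mathcal{C}=\{g<1\}$ and note, in direct analogy with Lemma \ref{equiv}, that
\[\mathbb{P}\bigl(\exists t\in[0,T].\ \bm{\phi}_{\bm{x}_0}^{\bm{w}}(t)\in\partial\mathcal{C}\bigr)=\mathbb{E}\bigl[\mathbf{1}_{\partial\mathcal{C}}(\widetilde{\bm{\phi}}_{\bm{x}_0}^{\bm{w}}(T))\bigr]=:q(T),\]
and that $q(T)$ is non-decreasing in $T$ (since once the stopped process hits $\partial\mathcal{C}$ it stays there). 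Because $g\equiv 1$ on $\partial\mathcal{C}$ while $\widetilde{\mathcal{L}}_{g,\bm{u}}\equiv 0$ there, condition \eqref{stochastic_c_e1} can be extended to $\overline{\mathcal{C}}$ as
\[\widetilde{\mathcal{L}}_{g,\bm{u}}(\bm{x})+(a-b)\mathbf{1}_{\partial\mathcal{C}}(\bm{x})\geq a\,g(\bm{x})-b,\]
which mirrors the indicator trick used in the proof of Theorem \ref{exponential_s2}. Applying Dynkin's formula to the stopped process with $m(T):=\mathbb{E}[g(\widetilde{\bm{\phi}}_{\bm{x}_0}^{\bm{w}}(T))]$ then yields the integral inequality
\[m(T)\geq g(\bm{x}_0)+\int_0^T\bigl(a\,m(t)-b-(a-b)q(t)\bigr)\,dt,\]
together with the a priori upper bound $m(T)\leq 1$ (since $g\leq 1$ on $\overline{\mathcal{C}}$).

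For the subcase $a>0$ I would pass to $\overline{g}=-g$ and invoke Grönwall's inequality in integral form exactly as in Theorem \ref{exponential_s2}, using the monotonicity $q(t)\leq q(T)$ to evaluate the resulting double integral and then rearranging to extract a lower bound on $q(T)$. The finite-horizon bound in part 1 follows, and the infinite-horizon bound is obtained by letting $T\to\infty$ inside the formula, combined with the monotone limit argument in the remark after Lemma \ref{equiv}.

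For the subcase $a\leq 0$ the exponential Grönwall is no longer the natural tool, and this is where the argument departs from Theorem \ref{exponential_s2}. I would instead exploit $m(t)\leq 1$ and $a\leq 0$ to replace $a\,m(t)$ by the constant $a$ inside the integrand, obtaining
\[1\geq m(T)\geq g(\bm{x}_0)+(a-b)T-(a-b)\int_0^T q(t)\,dt,\]
and then use $\int_0^T q(t)\,dt\leq T\,q(T)$ together with $a-b>0$ to solve directly for $q(T)$, yielding the claimed linear-in-$T$ bound; letting $T\to\infty$ forces $q(T)\to 1$, giving the infinite-horizon conclusion. The main obstacle is careful bookkeeping: getting the indicator extension to $\partial\mathcal{C}$ right, and then tracking the signs when multiplying the monotonicity bound $q(t)\leq q(T)$ by coefficients whose sign depends on $a$ (always navigable because $a-b>0$ is guaranteed by hypothesis).
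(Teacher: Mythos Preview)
Your proposal is correct and follows essentially the same route as the paper: for $a>0$ you replay the Theorem~\ref{exponential_s2} argument with $\partial\mathcal{C}$ in place of $\mathcal{C}_1$, and for $a\leq 0$ you combine the bound $a\,g\geq a$ on $\overline{\mathcal{C}}$ with $m(T)\leq 1$ and the monotonicity of $q$ to extract the linear-in-$T$ estimate. The only cosmetic difference is that the paper applies the bound $a\,g(\bm{x})-b\geq a-b$ pointwise before invoking Dynkin, whereas you apply the equivalent bound $a\,m(t)\geq a$ after integrating; both arrive at the same integral inequality and finish identically.
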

\begin{proof}
 1).  The conclusion for $a>0$ can be obtained by following the proof in Theorem \ref{exponential_s2}, with $\mathcal{C}_1$ being replaced by $\partial \mathcal{C}$. 

2). Since $g(\bm{x})$ satisfies $g(\bm{x})\leq 1$ over $\overline{\mathcal{C}}$, we have $ag(\bm{x})-b\geq (a-b)>0$. Therefore, 
\[\widetilde{\mathcal{L}}_{g,\bm{u}}(\bm{x})+(a-b)1_{\partial \mathcal{C}}(\bm{x})\geq  a-b\geq 0,  \forall \bm{x}\in \overline{\mathcal{C}}.\]
Further, we conclude that 
\[\mathbb{E}[g(\widetilde{\bm{\phi}}_{\bm{x}_0}^{\bm{w}}(t))]\geq g(\bm{x}_0), \forall t\in [0,T]\] 
and 
\[
\begin{split}
\mathbb{E}[g(\widetilde{\bm{\phi}}_{\bm{x}_0}^{\bm{w}}(T))]&-g(\bm{x}_0)\\
&+(a-b)\int_{0}^{T}\mathbb{E}[1_{\partial \mathcal{C}}(\widetilde{\bm{\phi}}_{\bm{x}_0}^{\bm{w}}(t))] dt\geq (a-b)T.  
\end{split}
\]
According to Lemma \ref{equiv}, we further have  
\[(a-b)T\mathbb{P}(\exists t\in [0,T]. \widetilde{\bm{\phi}}_{\bm{x}_0}^{\bm{w}}(t)\in \partial \mathcal{C})\geq (a-b)T+g(\bm{x}_0)-1,\]
which implies $\mathbb{P}(\exists t\in [0,T]. \bm{\phi}_{\bm{x}_0}^{\bm{w}}(t) \in \partial \mathcal{C})\geq 1-\frac{g(\bm{x}_0)-1}{(b-a)T}$. Thus, we have 
$\mathbb{P}(\exists t\in [0,T]. \bm{\phi}_{\bm{x}_0}^{\bm{w}}(t) \in \partial \mathcal{C})\geq \max\{0,1-\frac{g(\bm{x}_0)-1}{(b-a)T}\}$. Via letting $T$ approach infinity, we further have $\mathbb{P}(\exists t\in [0,\infty). \bm{\phi}_{\bm{x}_0}^{\bm{w}}(t) \in \partial \mathcal{C})=1$.  
\end{proof}

\begin{figure*}[h!]
\centering
\subfloat[$w=10^{12}$]{
\begin{minipage}{0.23\linewidth}
\includegraphics[width=\textwidth,height=0.8\textwidth]{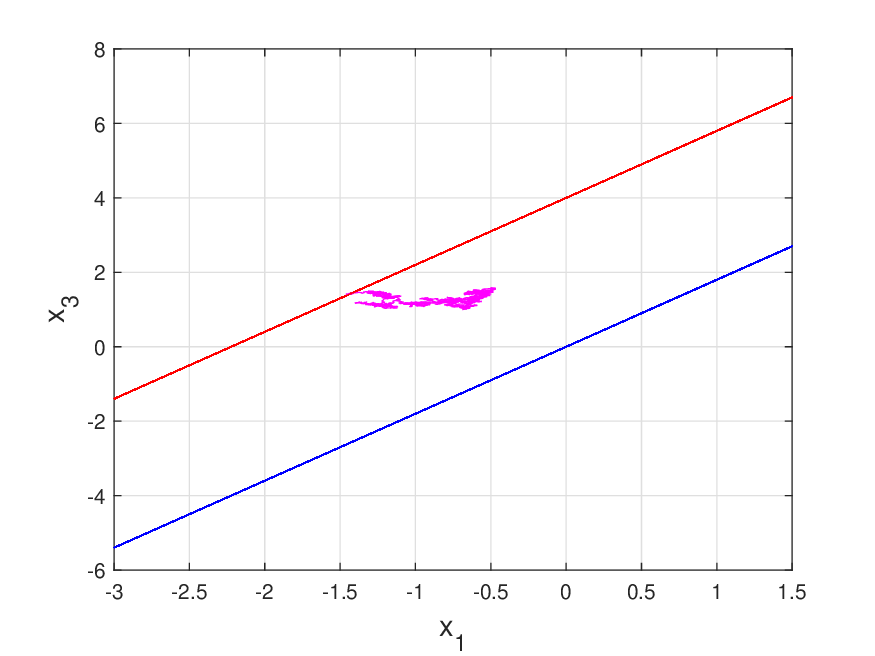}
\caption*{a-1}
\end{minipage}
\begin{minipage}{0.23\linewidth}
\includegraphics[width=\textwidth,height=0.8\textwidth]{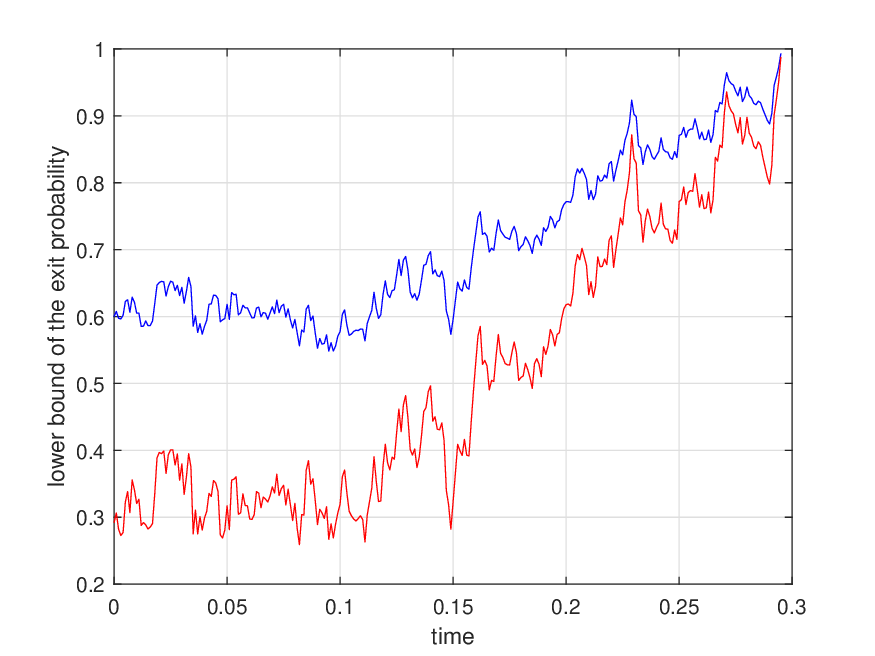}
\caption*{a-2}
\end{minipage}
\begin{minipage}{0.23\linewidth}
\includegraphics[width=\textwidth,height=0.8\textwidth]{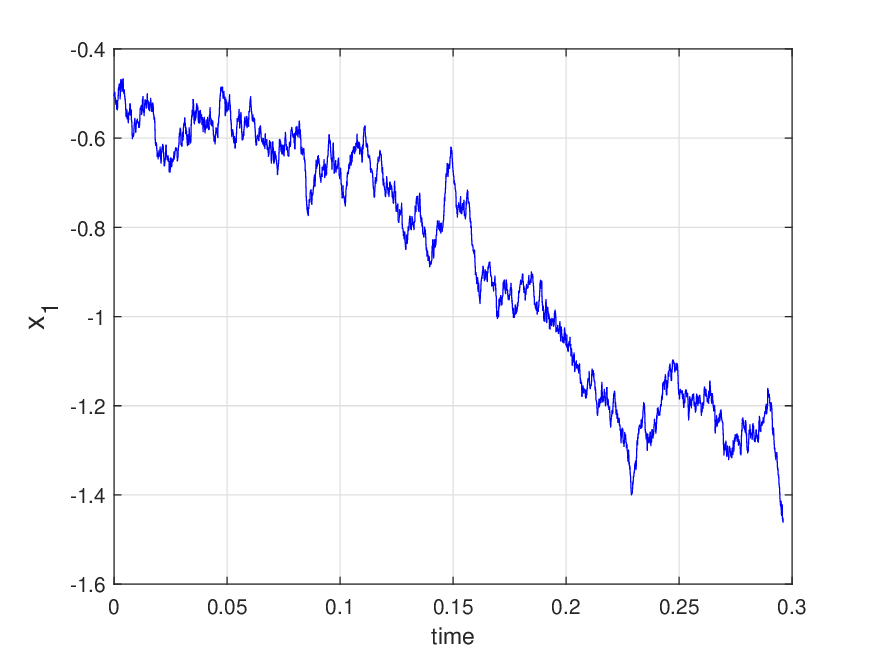}
\caption*{a-3}
\end{minipage}
\begin{minipage}{0.23\linewidth}
\includegraphics[width=\textwidth,height=0.8\textwidth]{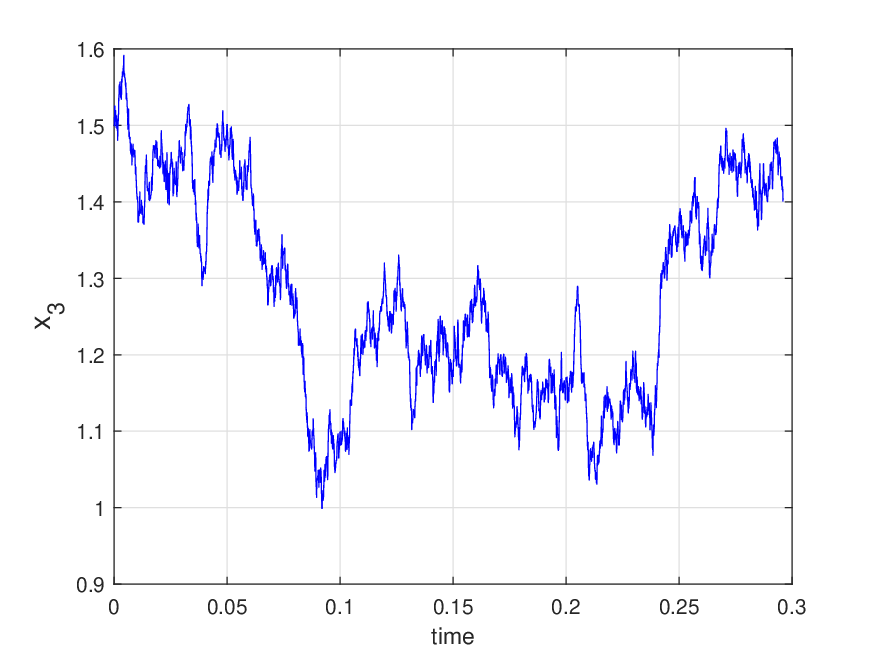}
\caption*{a-4}
\end{minipage}
}\\
\subfloat[$w=1$]{
\begin{minipage}{0.23\linewidth}
\includegraphics[width=\textwidth,height=0.8\textwidth]{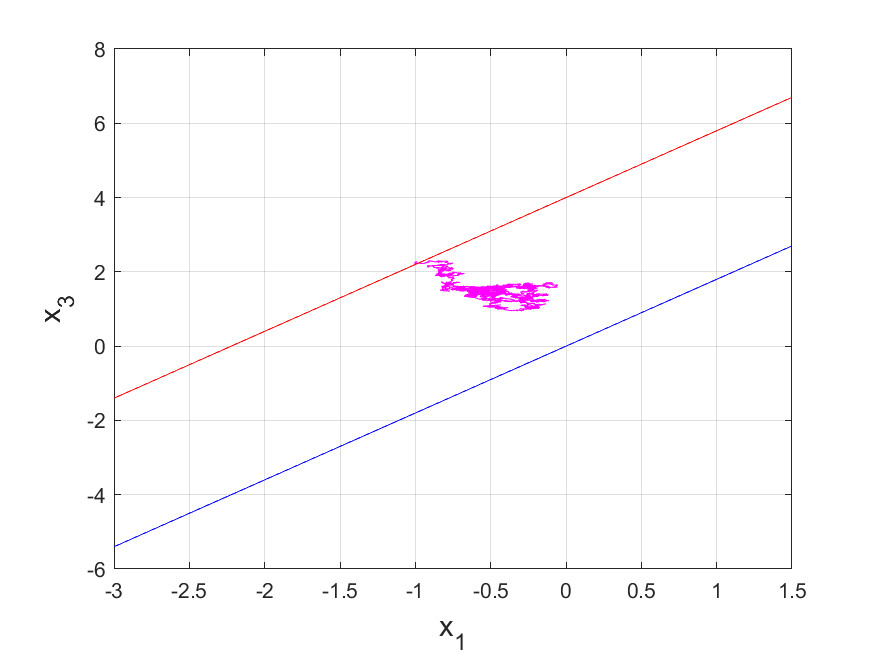}
\caption*{b-1}
\end{minipage}
\begin{minipage}{0.23\linewidth}
\includegraphics[width=\textwidth,height=0.8\textwidth]{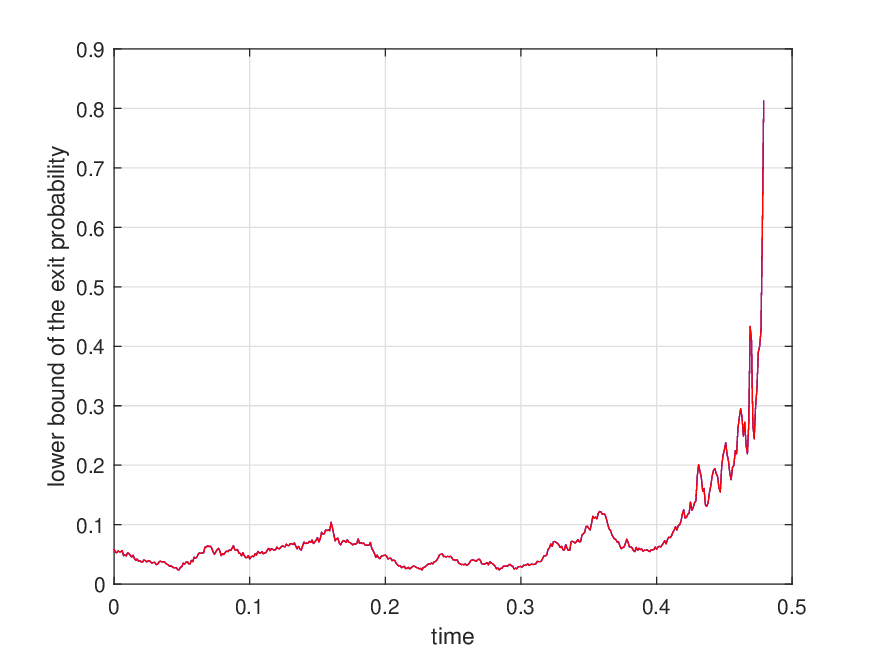}
\caption*{b-2}
\end{minipage}
\begin{minipage}{0.23\linewidth}
\includegraphics[width=\textwidth,height=0.8\textwidth]{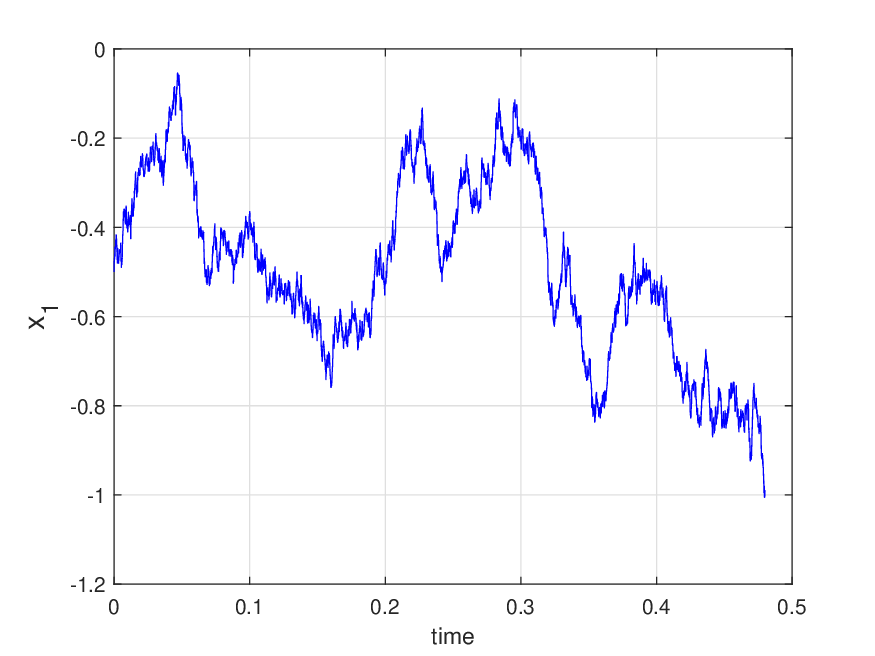}
\caption*{b-3}
\end{minipage}
\begin{minipage}{0.23\linewidth}
\includegraphics[width=\textwidth,height=0.8\textwidth]{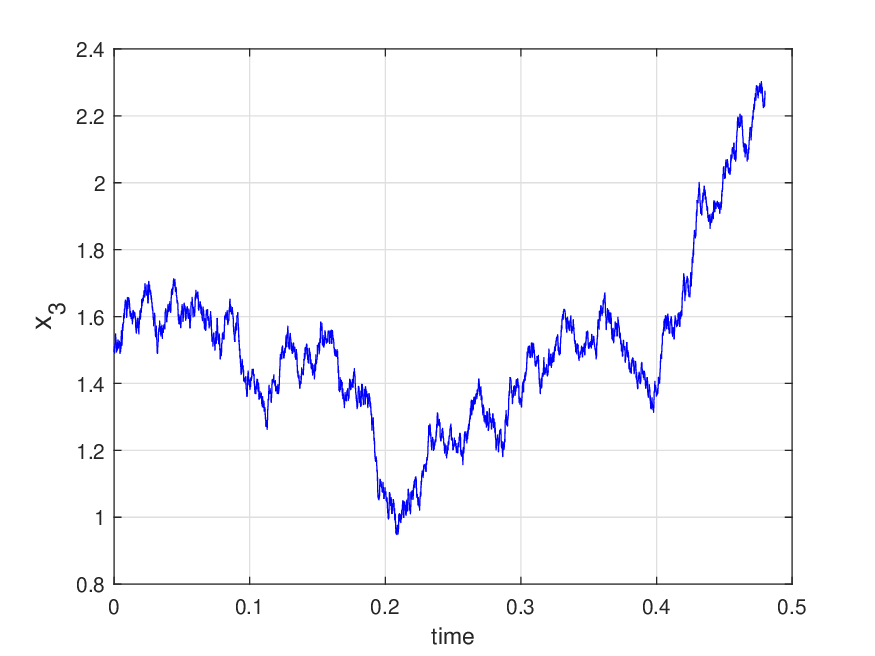}
\caption*{b-4}
\end{minipage}
}
 \caption{a-1 and b-1: red line -- $\mathcal{C}_1=\{(x_1,x_3)^{\top}\mid \frac{x_3-1.8x_1}{4}=1\}$, blue line -- $\{(x_1,x_3)^{\top}\mid \frac{x_3-1.8x_1}{4}=0\}$, magenta curve -- trajectory  driven by the controller computed via solving \eqref{s_qu}; a-2 and b-2: red curve -- lower bound of the exit probability when $\mathbb{T}=[0,2]$ with respect to time, i.e., $\frac{e^{a(T-t)}(h(\bm{x}(t))-\frac{b}{a})+\frac{b}{a}-1}{(1-\frac{b}{a})(e^{a(T-t)}-1)}$, blue curve -- lower bound of the exit probability when $\mathbb{T}=[0,\infty)$ with respect to time (blue and red curves collide in b-2), i.e., $\frac{h(\bm{x}(t))-\frac{b}{a}}{1-\frac{b}{a}}$; a-3 and b-3: $x_1(t)$; a-4 and b-4: $x_3(t)$.}
\label{fig_1}
\end{figure*}

\subsection{Linear-Program-Based Controllers}
\label{sec:lpbc}
In this subsection, we introduce our online linear programming based method for implicitly synthesizing optimal exit controllers that maximizes lower bounds of exit probabilities for both Problems I and II.

Except for the controller $\bm{u}$, both conditions \eqref{stochastic_c_e2} and \eqref{stochastic_c_e1} involve two additional free parameters, $a$ and $b$, that need to be determined in order to optimize the lower bound stated in Theorem \ref{exponential_s2} and \ref{exponential_s}. These conditions have a linear dependency on these parameters. However, the lower bounds exhibit nonlinearity  with respect to $a$ and $b$, except when $a\leq 0$ and $T\rightarrow \infty$ in Theorem \ref{exponential_s}. Hence, it is not advisable to solve a maximization problem with condition \eqref{stochastic_c_e2} (or \eqref{stochastic_c_e1}) and the lower bounds from Theorem \ref{exponential_s2} (or \ref{exponential_s}) as the objective function, especially for online optimization which which demands high efficiency.

On the other hand, it is observed that both the lower bounds $\frac{e^{aT}(h(\bm{x}_0)-\frac{b}{a})+\frac{b}{a}-1}{(1-\frac{b}{a})(e^{aT}-1)}$ and $\frac{h(\bm{x}_0)-\frac{b}{a}}{1-\frac{b}{a}}$ in Theorem \ref{exponential_s2} are monotonically increasing with $a$ and decreasing with $b$. Similar to Theorem \ref{exponential_s2}, all the lower bounds, i.e.,  $\frac{e^{aT}(g(\bm{x}_0)-\frac{b}{a})+\frac{b}{a}-1}{(1-\frac{b}{a})(e^{aT}-1)}$, $\frac{g(\bm{x}_0)-\frac{b}{a}}{1-\frac{b}{a}}$, $1-\frac{g(\bm{x}_0)-1}{(b-a)T}$, and $1$, in Theorem \ref{exponential_s2} are monotonically increasing with respect to $a$ and decreasing with respect to $b$. Moreover, it is observed that as the value of $a$ tends towards zero from the right, the lower bound  $\frac{e^{aT}(g(\bm{x}_0)-\frac{b}{a})+\frac{b}{a}-1}{(1-\frac{b}{a})(e^{aT}-1)}$ in Theorem \ref{exponential_s} tends to approach $1-\frac{g(\bm{x}_0)-1}{bT}$, i.e., $\lim_{a\rightarrow0^+}\frac{e^{aT}(g(\bm{x}_0)-\frac{b}{a})+\frac{b}{a}-1}{(1-\frac{b}{a})(e^{aT}-1)}=1-\frac{g(\bm{x}_0)-1}{bT}$, which is equal to the lower bound in the case of $a\leq 0$ with $a=0$.
Thus, the objective function $\max (a-wb)$ is a suitable candidate, where $w$ denotes a specified weighting factor. This factor allows for the adjustment of the relative importance of $b$ compared to $a$ according to the specific requirements of the problem. Additionally, in order to ensure boundedness of $a-wb$, we impose a bound constraint on $a$ and $b$. Consequently, an online linear program for implicitly synthesizing exit controllers of maximizing lower bounds of the exit probabilities for Problem I is formulated below.
\begin{equation} 
\label{s_qu}
\begin{split}
        &\max_{\bm{u}(\bm{x})\in \mathcal{U},a,b} a-w b\\
s.t.~&\mathcal{L}_{h,\bm{u}}(\bm{x})\geq a h(\bm{x})-b,\\
&a>b\geq 0, \\
&a\leq \delta,
\end{split}
\end{equation}
 where $\mathcal{L}_{h,\bm{u}}(\bm{x})=\frac{\partial h}{\partial \bm{x}}(\bm{f}_1(\bm{x})+\bm{f}_2(\bm{x})\bm{u}(\bm{x}))+\frac{1}{2}\textbf{tr}(\bm{\sigma}(\bm{x})^{\top}\frac{\partial^2 h}{\partial \bm{x}^2} \bm{\sigma}(\bm{x}))$ and $\delta>0$ is a specified bound.

 Correspondingly, an online linear program for implicitly synthesizing exit controllers of maximizing lower bounds of the exit probabilities for Problem II is formulated below.
 \begin{equation} 
\label{s_qu1}
\begin{split}
        &\max_{\bm{u}(\bm{x})\in \mathcal{U},a,b} a-w b\\
s.t.~&\mathcal{L}_{g,\bm{u}}(\bm{x})\geq a g(\bm{x})-b,\\
&a>b, \\
&a, b \in [-\delta,\delta],
\end{split}
\end{equation}
 where $\mathcal{L}_{g,\bm{u}}(\bm{x})=\frac{\partial g}{\partial \bm{x}}(\bm{f}_1(\bm{x})+\bm{f}_2(\bm{x})\bm{u}(\bm{x}))+\frac{1}{2}\textbf{tr}(\bm{\sigma}(\bm{x})^{\top}\frac{\partial^2 g}{\partial \bm{x}^2} \bm{\sigma}(\bm{x}))$ and $\delta>0$ is a specified  bound.

It is noteworthy that since a constraint on the control input, specifically $\bm{u}\in \mathcal{U}$, is  imposed, the existence of a solution for either of the optimization problems \eqref{s_qu} and \eqref{s_qu1} is not guaranteed. This is true even if the boundedness requirements on $a$ and $b$ are removed.

\begin{figure*}[h!]
\centering
\subfloat[$w=10^{12}$]{
\begin{minipage}{0.23\linewidth}
\includegraphics[width=\textwidth,height=0.8\textwidth]{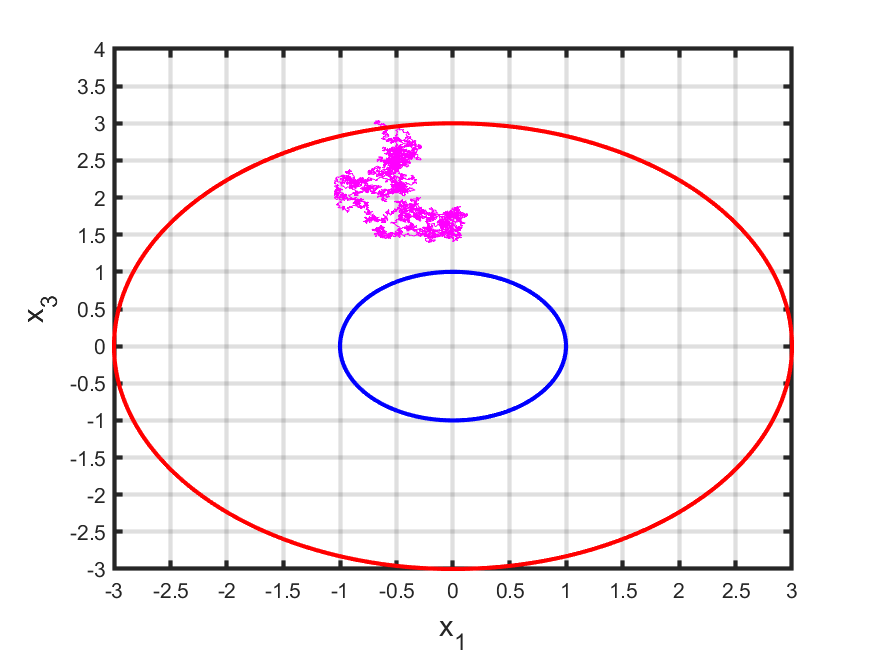}
\caption*{a-1}
\end{minipage}
\begin{minipage}{0.23\linewidth}
\includegraphics[width=\textwidth,height=0.8\textwidth]{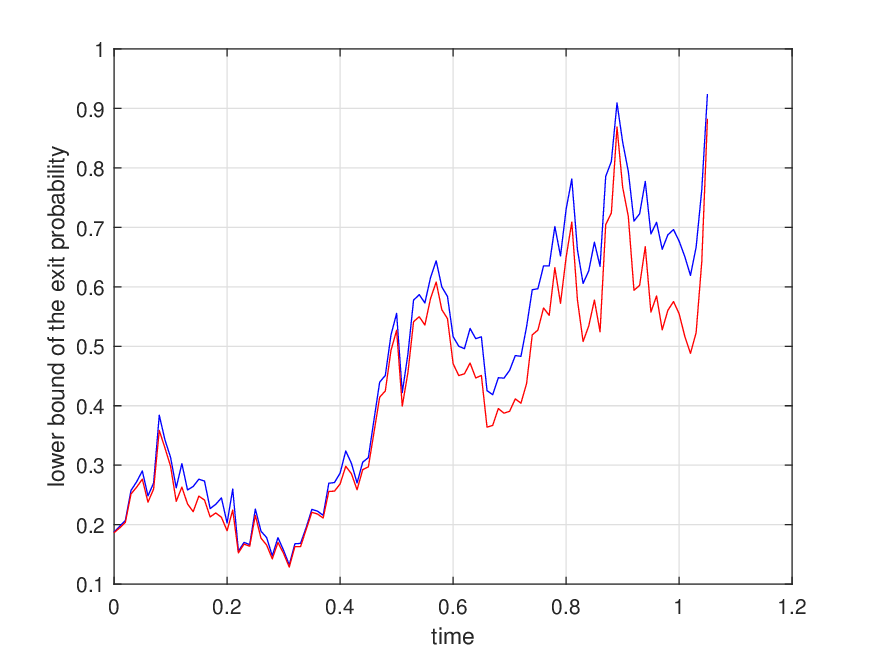}
\caption*{a-2}
\end{minipage}
\begin{minipage}{0.23\linewidth}
\includegraphics[width=\textwidth,height=0.8\textwidth]{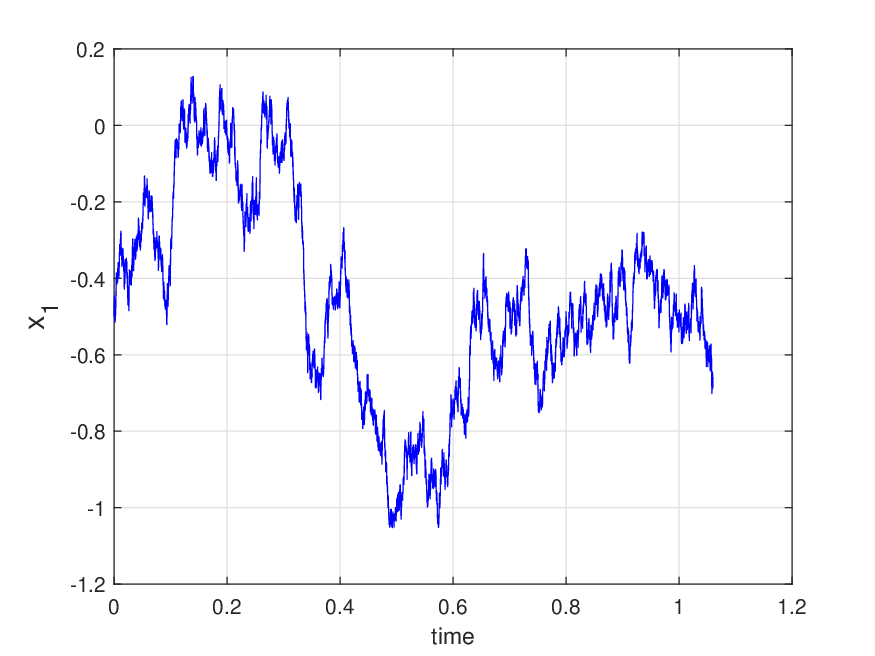}
\caption*{a-3}
\end{minipage}
\begin{minipage}{0.23\linewidth}
\includegraphics[width=\textwidth,height=0.8\textwidth]{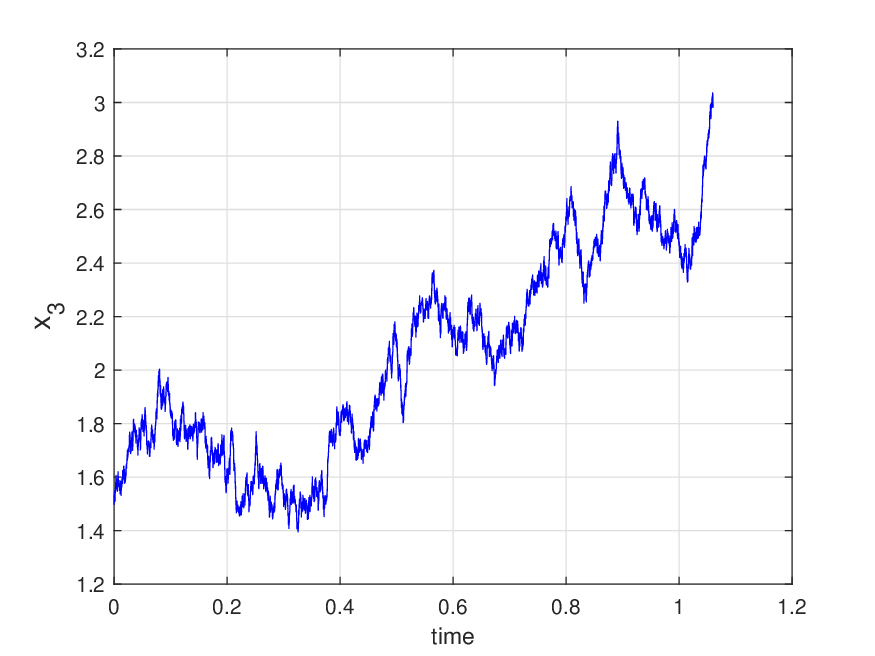}
\caption*{a-4}
\end{minipage}
}\\
\subfloat[$w=1$]{
\begin{minipage}{0.23\linewidth}
\includegraphics[width=\textwidth,height=0.8\textwidth]{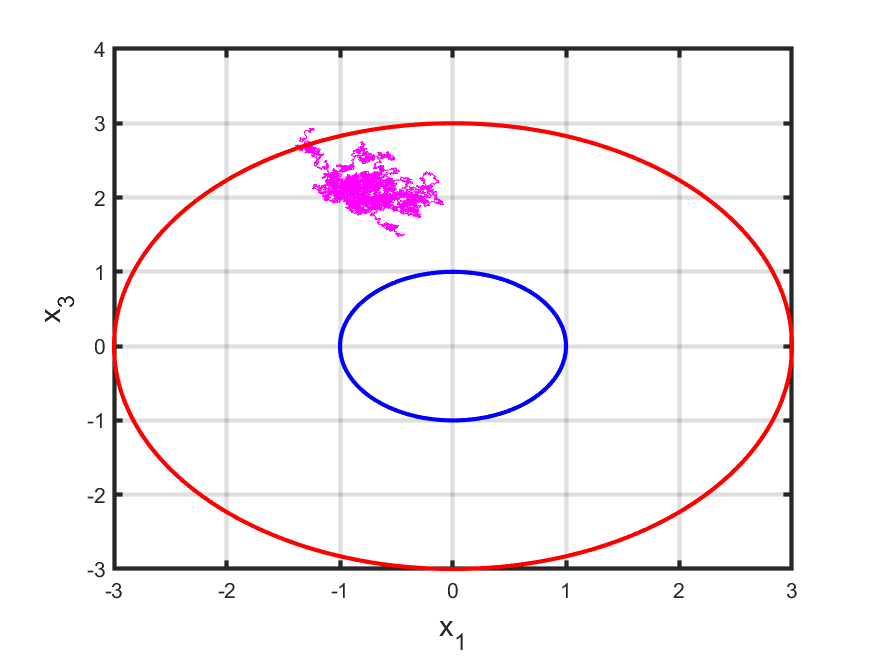}
\caption*{b-1}
\end{minipage}
\begin{minipage}{0.23\linewidth}
\includegraphics[width=\textwidth,height=0.8\textwidth]{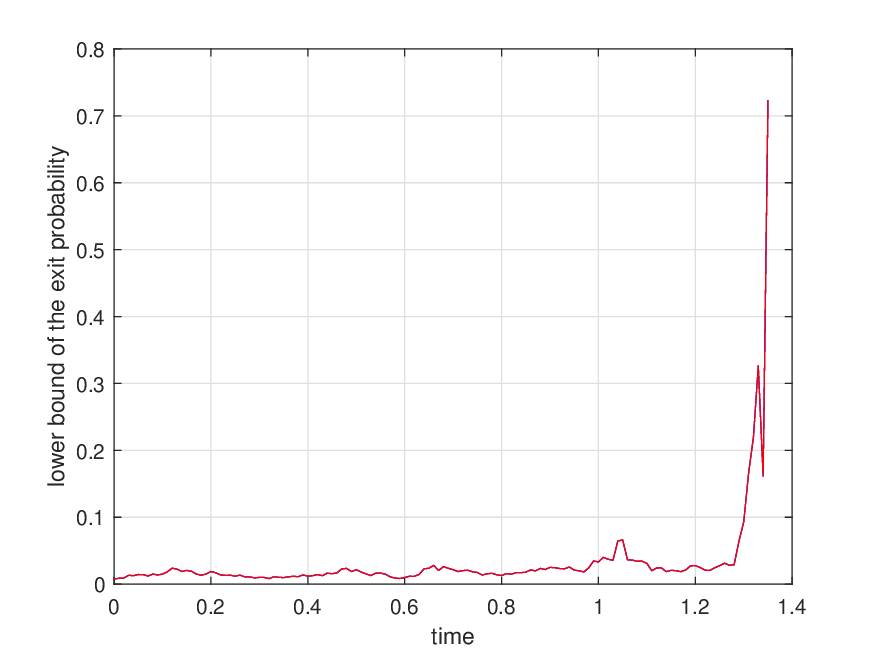}
\caption*{b-2}
\end{minipage}
\begin{minipage}{0.23\linewidth}
\includegraphics[width=\textwidth,height=0.8\textwidth]{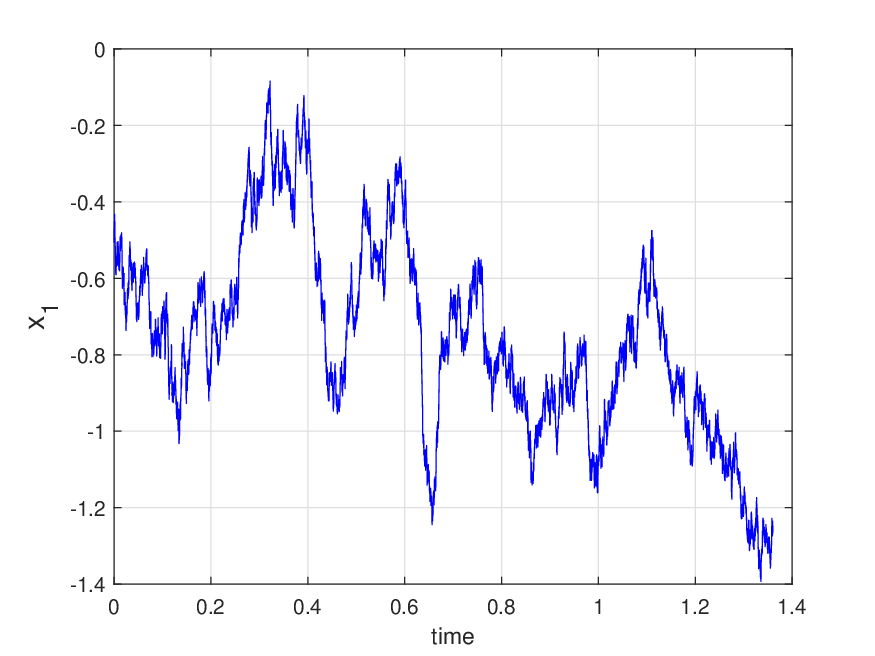}
\caption*{b-3}
\end{minipage}
\begin{minipage}{0.23\linewidth}
\includegraphics[width=\textwidth,height=0.8\textwidth]{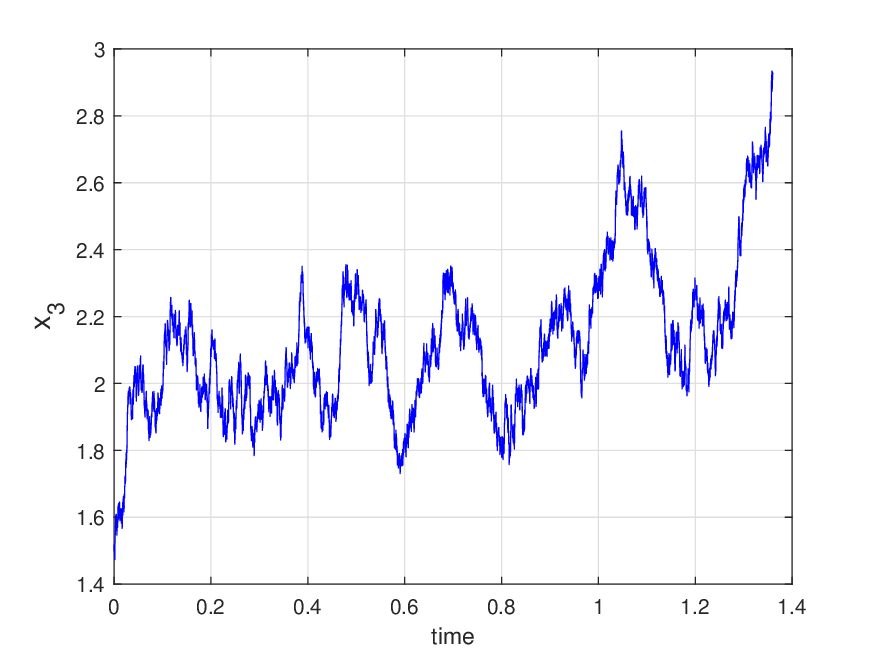}
\caption*{b-4}
\end{minipage}
    }
\caption{a-1 and b-1: red line -- $\mathcal{C}_1=\{(x_1,x_3)^{\top}\mid \frac{x_1^2+x_3^2-1}{8}=1\}$, blue line -- $\{(x_1,x_3)^{\top}\mid \frac{x_1^2+x_3^2-1}{8}=0\}$, magenta curve -- trajectory  driven by the controller computed via solving \eqref{s_qu}; a-2 and b-2: red curve -- lower bound of the exit probability when $\mathbb{T}=[0,2]$ with respect to time, i.e., $\frac{e^{a(T-t)}(h(\bm{x}(t))-\frac{b}{a})+\frac{b}{a}-1}{(1-\frac{b}{a})(e^{a(T-t)}-1)}$, blue curve -- lower bound of the exit probability when $\mathbb{T}=[0,\infty)$ with respect to time (blue and red curves collide in b-2), i.e., $\frac{h(\bm{x}(t))-\frac{b}{a}}{1-\frac{b}{a}}$; a-3 and b-3: $x_1(t)$; a-4 and b-4: $x_3(t)$.}
\label{fig_3}
\end{figure*}

\section{Examples}
\label{sec:expe}
In this section we demonstrate our linear-programming based exit controllers synthesis method on one example involving three scenarios.

    Consider a system with three states $(x_1,x_2,x_3)^{\top}$\cite{wang2021safety,clark2019control}, where $x_1$ denotes the velocity of the following vehicle, $x_2$ denotes the velocity of the leading vehicle, and $x_3$ denotes the distance between the vehicles. The velocity of the leading vehicle was chosen as a constant. The input is the force applied to the following vehicle, leading to dynamics
\[
d\begin{pmatrix}
    x_1\\
    x_2\\
    x_3
\end{pmatrix}
=\begin{pmatrix}
    -F_r(\bm{x})/M\\
    0\\
    x_2-x_1
\end{pmatrix}
+\begin{pmatrix}
    1/M\\
    0\\
    0
\end{pmatrix}\bm{u}
+\sum dW,
\]
where $F_r=f_0+f_1x_1+f_2 x_1^2$ is the aerodynamic drag with constants $f_0=0.1,f_1=5$, and $f_1=0.25$. The mass $M=1650$, $\sum=\begin{pmatrix}
    1&0&0\\
    0&0&0\\
    0&0&1
\end{pmatrix}$, and $u\in [-1,1]$. The initial state for $x_2$ was chosen as $x_2(0)=0.5$. Since the velocity of the leading vehicle was chosen as a constant, the system is equivalently reduced to
\[
d\begin{pmatrix}
    x_1\\
    x_3
\end{pmatrix}
=\begin{pmatrix}
    -F_r(\bm{x})/M\\
    0.5-x_1
\end{pmatrix}
+\begin{pmatrix}
    1/M\\
    0
\end{pmatrix}\bm{u}
+\begin{pmatrix}
    1& 0\\
    0&1
\end{pmatrix}dW.
\]

We consider three  scenarios with both the finite time horizon of $\mathbb{T}=[0,2]$ and the infinite time horizon of $\mathbb{T}=[0,\infty)$. Moreover, the weighting factor $w$ in the optimization problems \eqref{s_qu} and \eqref{s_qu1} is chosen to be $10^{12}$ and $1$. The first two scenarios correspond to Problem I. The first scenario features an unbounded safe set $\mathcal{S}$ and uncomfortable set $\mathcal{C}$, while the second one features an unbounded safe set $\mathcal{S}$ but a bounded uncomfortable set $\mathcal{C}$. The third scenario corresponds to Problem II, which includes a bounded uncomfortable set $\mathcal{C}$. Detailed configuration information and some computation results are presented below.
\begin{enumerate}
\item The safe set is $\mathcal{S}=\{(x_1,x_3)^{\top}\mid x_3-1.8x_1>0\}$, the safe but uncomfortable set is defined as $\mathcal{C}=\{(x_1,x_3)^{\top}\mid 0<\frac{x_3-1.8x_1}{4}<1\}$, and the initial state is set to $(-0.5,1.5)^{\top}$. The simulation trajectories and lower bounds of exit probabilities, computed by solving the linear optimization \eqref{s_qu} with $\delta=10$, are presented in Fig. \ref{fig_1}. 

\item The safe set is $\mathcal{S}=\{(x_1,x_3)^{\top}\mid x_1^2+x_3^2-1>0\}$, the safe but uncomfortable set is $\mathcal{C}=\{(x_1,x_3)^{\top}\mid 0<\frac{x_1^2+x_3^2-1}{8}<1\}$, and the initial state is $(-0.5,1.5)^{\top}$.  The simulation trajectories and lower bounds of exit probabilities, computed by solving the linear optimization \eqref{s_qu} with $\delta=10$, are presented in Fig. \ref{fig_3}.

\item The safe set is $\mathcal{S}=\{(x_1,x_3)^{\top}\mid x_1^2+x_3^2>1\}$, the uncomfortable set is $\mathcal{C}=\{(x_1,x_3)^{\top}\mid \frac{(x_1-10)^2+(x_3-10)^2}{64}<1\}$, and the initial state is $(10,10)^{\top}$. The simulation trajectories and lower bounds of exit probabilities, computed by solving optimization \eqref{s_qu1} with $\delta=10$, are presented in Fig. \ref{fig_5}. 

\end{enumerate}

The results presented in Figures \ref{fig_1}, \ref{fig_3}, and \ref{fig_5} demonstrate the significant impact of the weighting factor $w$ on the performance of the synthesized controllers. Notably, in the first two scenarios, the controllers computed with $w=10^{12}$ show superior performance in safely guiding the system out of the uncomfortable set $\mathcal{C}$ with high probabilities, compared to those obtained with $w=1$, especially during the initial phase. In the third scenario, where $\partial \mathcal{C}\cap \partial \mathcal{S}=\emptyset$, the controller synthesized with $w=10^{12}$ exhibits superior performance in terms of achieving high probabilities for safely driving the system out of the uncomfortable set $\mathcal{C}$, when an infinite time horizon $\mathbb{T}=[0,\infty)$ is considered. However, the performance during the initial phase is inferior when the time horizon is limited to $\mathbb{T}=[0,2]$, compared to that obtained with $w=1$.

\begin{figure*}[h!]
\centering
\subfloat[$w=10^{12}$]{
\begin{minipage}{0.23\linewidth}
\includegraphics[width=\textwidth,height=0.7\textwidth]{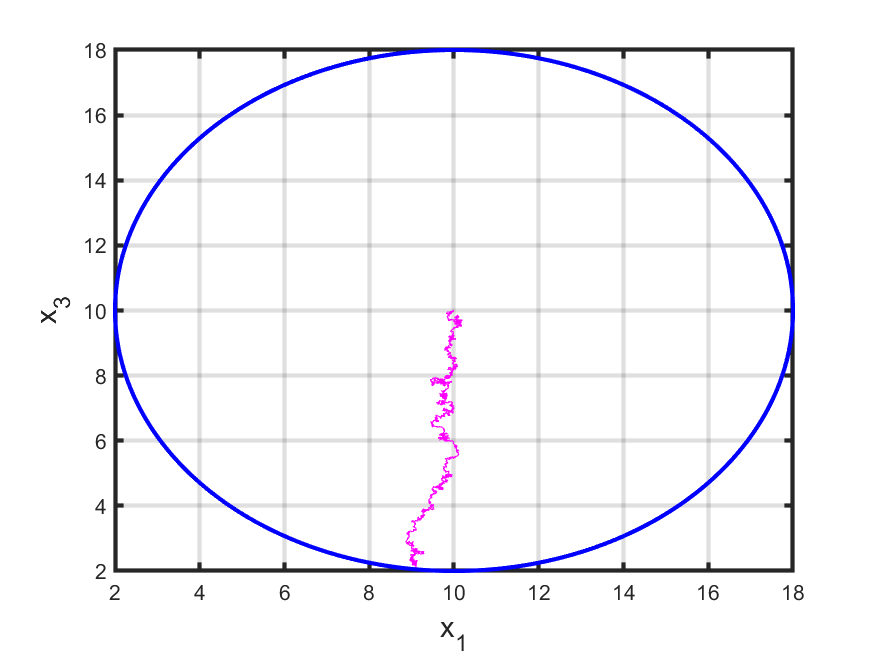}
\caption*{a-1}
\end{minipage}
\begin{minipage}{0.23\linewidth}
\includegraphics[width=\textwidth,height=0.7\textwidth]{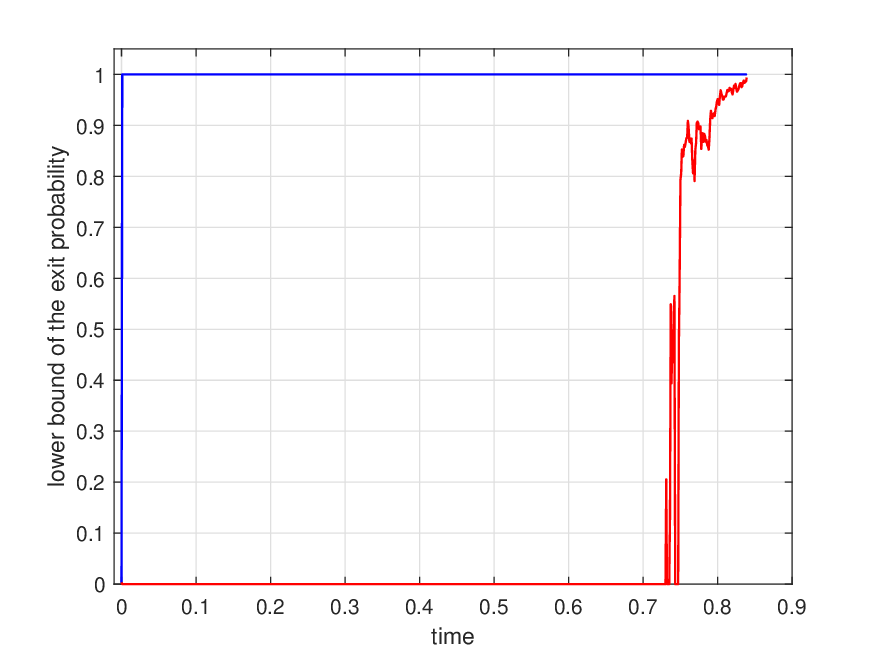}
\caption*{a-2}
\end{minipage}
\begin{minipage}{0.23\linewidth}
\includegraphics[width=\textwidth,height=0.7\textwidth]{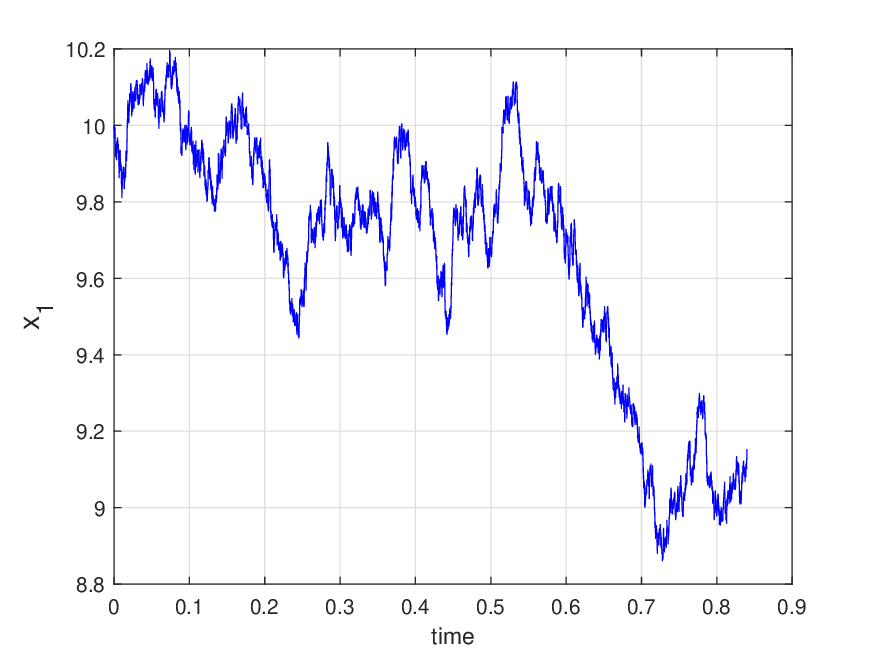}
\caption*{a-3}
\end{minipage}
\begin{minipage}{0.23\linewidth}
\includegraphics[width=\textwidth,height=0.7\textwidth]{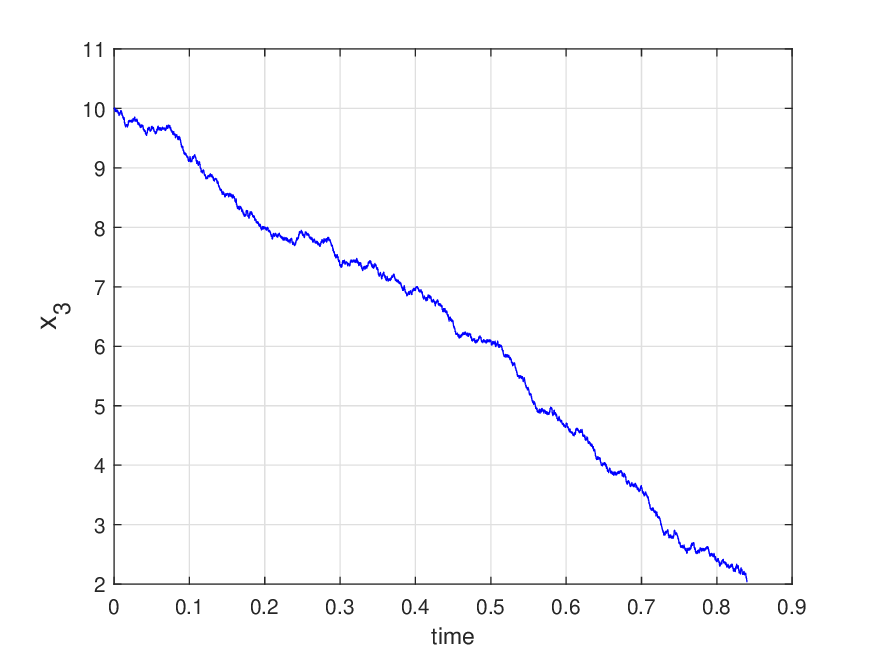}
\caption*{a-4}
\end{minipage}
}\\
\subfloat[$w=1$]{
\begin{minipage}{0.23\linewidth}
\includegraphics[width=0.8\textwidth,height=0.7\textwidth]{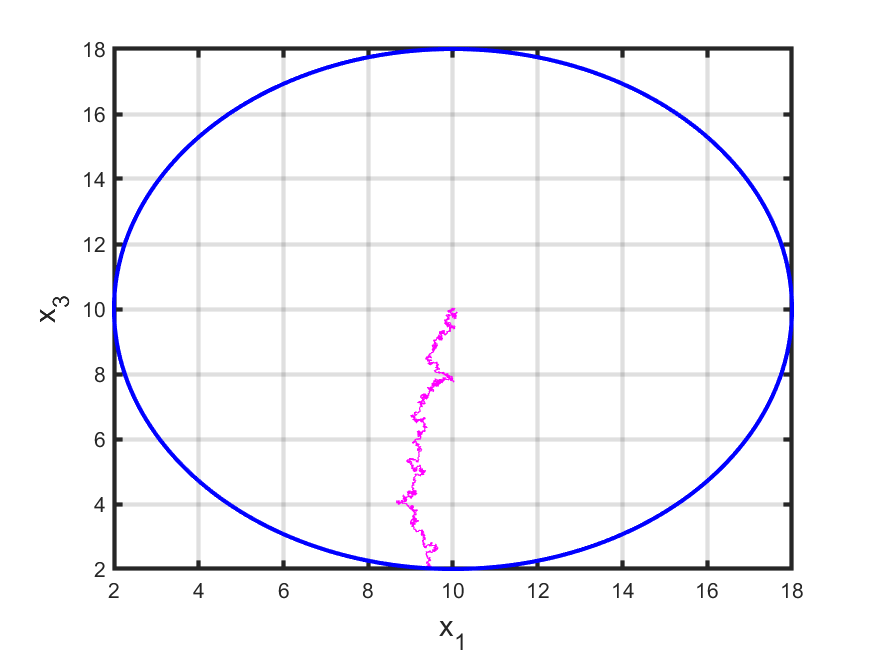}
\caption*{b-1}
\end{minipage}
\begin{minipage}{0.23\linewidth}
\includegraphics[width=\textwidth,height=0.7\textwidth]{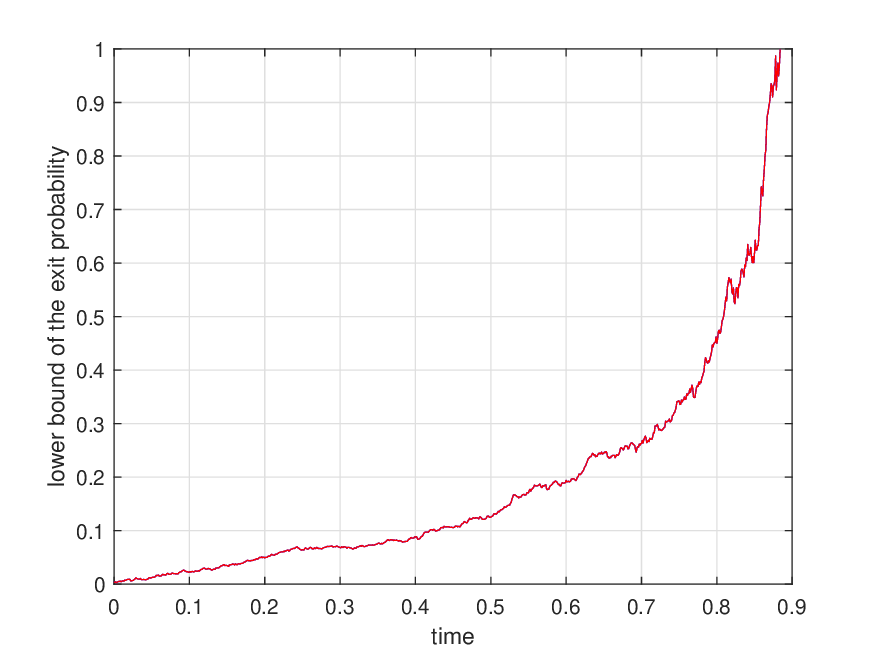}
\caption*{b-2}
\end{minipage}
\begin{minipage}{0.23\linewidth}
\includegraphics[width=\textwidth,height=0.7\textwidth]{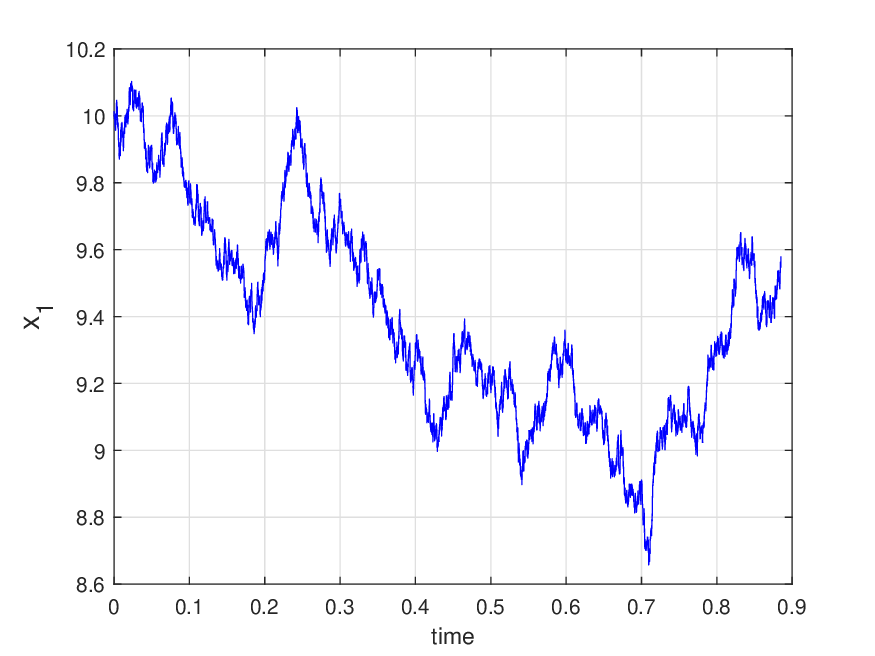}
\caption*{b-3}
\end{minipage}
\begin{minipage}{0.23\linewidth}
\includegraphics[width=\textwidth,height=0.7\textwidth]{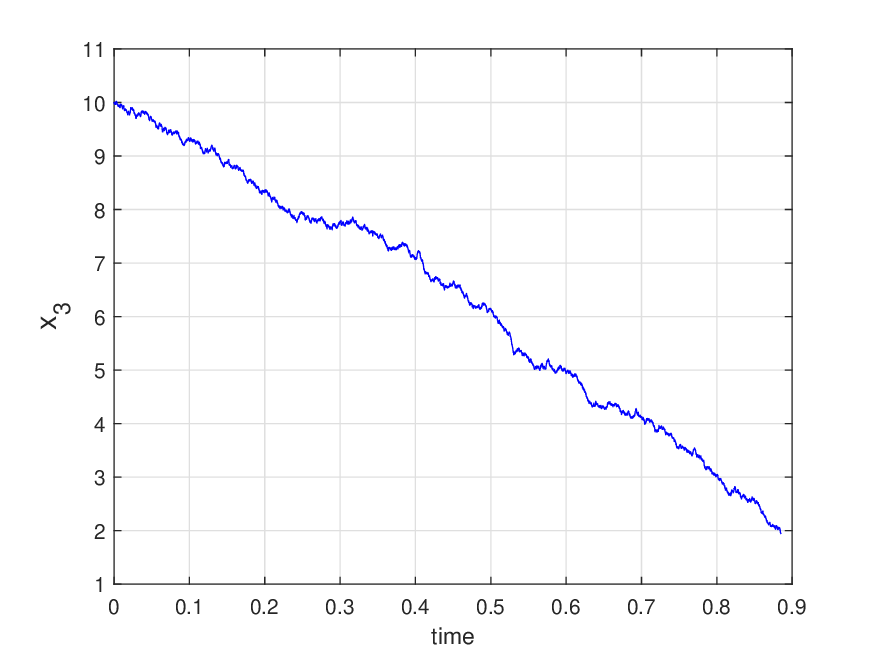}
\caption*{b-4}
\end{minipage}
}
\caption{a-1 and b-1: blue curve -- $\{(x_1,x_3)^{\top}\mid \frac{(x_1-1))^2+(x_3-10)^2}{64}=1\}$, magenta curve -- trajectory  driven by the controller computed via solving \eqref{s_qu1}; a-2 and b-2: red curve -- lower bound of the exit probability when $\mathbb{T}=[0,2]$ with respect to time, i.e., $\frac{e^{a(T-t)}(g(\bm{x}(t))-\frac{b}{a})+\frac{b}{a}-1}{(1-\frac{b}{a})(e^{a(T-t)}-1)}$ if $a>0$ and $1-\frac{g(\bm{x}(t))-1}{(b-a)(T-t)}$ if $a\leq 0$, blue curve -- lower bound of the exit probability when $\mathbb{T}=[0,\infty)$ with respect to time (blue and red curves collide in b-2), i.e., $\frac{h(\bm{x}(t))-\frac{b}{a}}{1-\frac{b}{a}}$ if $a>0$ and $1$ if $a\leq 0$; a-3 and b-3: $x_1(t)$; a-4 and b-4: $x_3(t)$.}
\label{fig_5}
\end{figure*}

\section{Conclusion}
\label{sec:con}
This paper focused on the synthesis of safe exit controllers for continuous-time systems described by SDEs. The main objective is to design controllers that maximize the lower bounds of the exit probability that the system escapes from a safe but uncomfortable set within a specific time horizon and enters a comfortable set. The paper discussed two cases: the first case involves the scenario where the boundary of the safe set is a subset of the boundary of the safe but uncomfortable set, and the second case deals with situations where the boundaries do not intersect. In the first case, the paper presented a sufficient condition for lower-bounding the exit probability. This condition provides a guideline for constructing online linear programming problems, which in turn facilitate synthesizing optimal exit controllers implicitly. These controllers are designed to maximize the lower bounds of the exit probabilities. Then, this sufficient condition was extended  to the second case, where the boundaries of the safe set and the uncomfortable set do not intersect. Finally, an example was presented to validate the proposed method. 

The first case discussed in this paper involves a scenario where the boundary of the safe set intersects with that of the uncomfortable set. However, it is limited to the typical case where the boundary of the safe set is a subset of the boundary of the uncomfortable set. In future studies, we will explore more general cases where only a subset of the safe set's boundary intersects that of the uncomfortable set.

\bibliographystyle{abbrv}
\bibliography{ref}

\begin{thebibliography}{10}

\bibitem{abate2010approximate}
A.~Abate, J.-P. Katoen, J.~Lygeros, and M.~Prandini.
\newblock Approximate model checking of stochastic hybrid systems.
\newblock {\em European Journal of Control}, 16(6):624--641, 2010.

\bibitem{ames2016control}
A.~D. Ames, X.~Xu, J.~W. Grizzle, and P.~Tabuada.
\newblock Control barrier function based quadratic programs for safety critical
  systems.
\newblock {\em IEEE Transactions on Automatic Control}, 62(8):3861--3876, 2016.

\bibitem{clark2019control}
A.~Clark.
\newblock Control barrier functions for complete and incomplete information
  stochastic systems.
\newblock In {\em 2019 American Control Conference (ACC)}, pages 2928--2935.
  IEEE, 2019.

\bibitem{franzle2011measurability}
M.~Fr{\"a}nzle, E.~M. Hahn, H.~Hermanns, N.~Wolovick, and L.~Zhang.
\newblock Measurability and safety verification for stochastic hybrid systems.
\newblock In {\em Proceedings of the 14th international conference on Hybrid
  systems: computation and control}, pages 43--52, 2011.

\bibitem{freeman1996inverse}
R.~A. Freeman and P.~V. Kokotovic.
\newblock Inverse optimality in robust stabilization.
\newblock {\em SIAM journal on control and optimization}, 34(4):1365--1391,
  1996.

\bibitem{kloeden1992stochastic}
P.~E. Kloeden, E.~Platen, P.~E. Kloeden, and E.~Platen.
\newblock {\em Stochastic differential equations}.
\newblock Springer, 1992.

\bibitem{kushner1967}
H.~J. Kushner.
\newblock {\em Stochastic stability and control}.
\newblock New York: Academic, 1967.

\bibitem{kushner1967stochastic}
H.~J. Kushner and Kushner.
\newblock {\em Stochastic stability and control}, volume~33.
\newblock Academic press New York, 1967.

\bibitem{nilsson2020lyapunov}
P.~Nilsson and A.~D. Ames.
\newblock Lyapunov-like conditions for tight exit probability bounds through
  comparison theorems for sdes.
\newblock In {\em 2020 American Control Conference (ACC)}, pages 5175--5181.
  IEEE, 2020.

\bibitem{prajna2007framework}
S.~Prajna, A.~Jadbabaie, and G.~J. Pappas.
\newblock A framework for worst-case and stochastic safety verification using
  barrier certificates.
\newblock {\em IEEE Transactions on Automatic Control}, 52(8):1415--1428, 2007.

\bibitem{santoyo2021barrier}
C.~Santoyo, M.~Dutreix, and S.~Coogan.
\newblock A barrier function approach to finite-time stochastic system
  verification and control.
\newblock {\em Automatica}, 125:109439, 2021.

\bibitem{steinhardt2012finite}
J.~Steinhardt and R.~Tedrake.
\newblock Finite-time regional verification of stochastic non-linear systems.
\newblock {\em The International Journal of Robotics Research}, 31(7):901--923,
  2012.

\bibitem{tomlin2003computational}
C.~J. Tomlin, I.~Mitchell, A.~M. Bayen, and M.~Oishi.
\newblock Computational techniques for the verification of hybrid systems.
\newblock {\em Proceedings of the IEEE}, 91(7):986--1001, 2003.

\bibitem{ville1939etude}
J.~Ville.
\newblock Etude critique de la notion de collectif.
\newblock 1939.

\bibitem{wang2021safety}
C.~Wang, Y.~Meng, S.~L. Smith, and J.~Liu.
\newblock Safety-critical control of stochastic systems using stochastic
  control barrier functions.
\newblock In {\em 2021 60th IEEE Conference on Decision and Control (CDC)},
  pages 5924--5931. IEEE, 2021.

\bibitem{xue2023reach}
B.~Xue.
\newblock Reach-avoid controllers synthesis for safety critical systems.
\newblock {\em arXiv preprint arXiv:2302.14565}, 2023.

\bibitem{xue2023reachability}
B.~Xue.
\newblock Reachability verification for stochastic discrete-time dynamical
  systems.
\newblock {\em arXiv preprint arXiv:2302.09843}, 2023.

\bibitem{xue2021reach}
B.~Xue, R.~Li, N.~Zhan, and M.~Fr{\"a}nzle.
\newblock Reach-avoid analysis for stochastic discrete-time systems.
\newblock In {\em 2021 American Control Conference (ACC)}, pages 4879--4885.
  IEEE, 2021.

\bibitem{xue2022reach}
B.~Xue, N.~Zhan, and M.~Fr{\"a}nzle.
\newblock Reach-avoid analysis for stochastic differential equations.
\newblock {\em arXiv preprint arXiv:2208.10752}, 2022.

\bibitem{xue2023cdc}
Y.~Yu, T.~Wu, B.~Xia, J.~Wang, and B.~Xue.
\newblock Safe probabilistic invariance verification for stochastic
  discrete-time dynamical systems.
\newblock In {\em 2023 62nd IEEE Conference on Decision and Control (CDC)},
  pages 5175--5181. IEEE, 2023.

\end{thebibliography}
\end{document}